\newcommand*{\OPT}{\ensuremath{\textsc{Opt}}\xspace}
\newcommand*{\PROBLEM}{\textsc{CloudScheduling}\xspace}
\newcommand*{\ALGO}{\textsc{BatchedDispatch}\xspace}
\newcommand{\setup}[1]{s_{\!\mathsmaller{#1}}}
\newcommand{\size}[2]{p_{#1,{\!\!\mathsmaller{~#2}}}}
\newcommand{\slack}[2]{\sigma_{#1,{\!\!\mathsmaller{~#2}}}}
\newtheorem{theorem}{Theorem}[section]
\newtheorem{lemma}[theorem]{Lemma}
\newtheorem{proposition}[theorem]{Proposition}
\begin{document}

\title{Cost-efficient Scheduling\\ on Machines from the Cloud\thanks{This work was partially supported by the German Research Foundation (DFG)
within the Collaborative Research Centre ``On-The-Fly Computing'' (SFB 901)}~~\footnote{A conference version of this paper is published in the proceedings of the 10th Annual International Conference on Combinatorial Optimization and Applications (COCOA). The final publication is available at Springer via \url{http://dx.doi.org/10.1007/978-3-319-48749-6_42}.}}
\author{Alexander M\"acker \and Manuel Malatyali \and Friedhelm Meyer auf der Heide \and S\"oren Riechers \\ [0.4em]
	Heinz Nixdorf Institute \& 	Computer Science Department\\
	Paderborn University, Germany\\[0.2em]
	\{amaecker, malatya, fmadh, soerenri\}@hni.upb.de}
	\date{}
\maketitle

\begin{abstract}
We consider a scheduling problem where machines need to be rented from the cloud in order to process jobs.
There are two types of machines available which can be rented for machine-type dependent prices and for arbitrary durations.
However, a machine-type dependent setup time is required before a machine is available for processing.
Jobs arrive online over time, have machine-type dependent sizes and have individual deadlines.
The objective is to rent machines and schedule jobs so as to meet all deadlines while minimizing the rental cost.

Since we observe the slack of jobs to have a fundamental influence on the competitiveness, we study the model when instances are parameterized by their (minimum) slack.
An instance is called to have a slack of $\beta$ if, for all jobs, the difference between the job's release time and the latest point in time at which it needs to be started is at least $\beta$.
While for $\beta < s$ no finite competitiveness is possible, our main result is an $O(\nicefrac{c}{\varepsilon} + \nicefrac{1}{\varepsilon^3})$-competitive online algorithm for $\beta = (1+\varepsilon)s$ with $\nicefrac{1}{s} \leq \varepsilon \leq 1$, where $s$ and $c$ denotes the largest setup time and the cost ratio of the machine-types, respectively.
It is complemented by a lower bound of $\Omega(\nicefrac{c}{\varepsilon})$.
\end{abstract}

\section{Introduction}
Cloud computing provides a new concept for provisioning computing power, usually in the form of virtual machines (VMs), that offers users potential benefits but also poses new (algorithmic) challenges to be addressed.
On the positive side, main advantages for users of moving their business to the cloud are manifold:
Possessing huge computing infrastructures as well as asset and maintenance cost associated with it are no longer required and hence, costs and risks of large investments are significantly reduced.
Instead, users are only charged to the extent they actually use computing power 
and it can be scaled up and down depending on current demands.
In practice, two ways of renting machines from the cloud are typically available -- on-demand-instances without long-term commitment, and reserved instances.  
In the former case, machines can be rented for any duration and the charging is by the hour (Amazon EC2 \cite{ec2}) or by the minute (Google Cloud \cite{google}); whereas in the latter case, specific (long-term) leases of various lengths are available. 
We focus on the former case and note that the latter might better be captured within the leasing framework introduced in \cite{meyerson05}.

Despite these potential benefits, the cloud also bears new challenges in terms of renting and utilizing resources in a (cost-)efficient way. 
Typically, the cloud offers diverse VM types differing in the provided resources and further characteristics.
For example, one might think of high-CPU instances, which especially suit the requirements for compute-intensive jobs, and high-memory instances for I/O-intensive jobs.
The performance of a job might then strongly depend on the VM type on which it is executed.
Hence, deliberate decisions on the machines to be rented need to be taken in order to be cost-efficient while guaranteeing a good performance, for example, in terms of user-defined due dates or deadlines implicitly given by a desired quality of service level.
Also, despite the fact that computing power can be scaled according to current demands, one needs to take into account that this scaling is not instantaneous and it initially could take some time for a VM to be ready for processing workload. 
A recent study \cite{mao12} shows such setup times to typically be in the range of several minutes for common cloud providers and hence, to be non-ignorable for the overall performance~\cite{mao10}.

\subsection{Problem Description}
We study the problem \PROBLEM, which is extracted from the preceding observations.
It studies a basic variant exploiting the heterogeneity of available machines and workloads by introducing the option for a scheduling algorithm to choose between machines of two types.
This approach can, for instance, be motivated by the fact that certain computations may be accelerated by the use of GPUs.
As one example, in \cite{cpugpu} it is observed that, for certain workloads, one can significantly improve performance when not only using classical CPU but additionally GPU instances in Amazon EC2.
\paragraph{Machine Environment}
There are two types $\tau \in \{A,B\}$ of machines, which can be rented in order to process workload: 
A machine $M$ of type $\tau$ can be opened at any time $a_M$ and, after a setup taking some non-negligible $s_\tau$ time units was performed, can be closed at any time $b_M$ with $b_M \geq a_M+s_\tau$, for values $a_M, b_M, s_\tau \in \mathbb{R}_{\geq 0}$.
It can process workload during the interval $[a_M+s_\tau, b_M]$ and
the rental cost incurred by a machine $M$ is determined by the duration for which it is open and is formally given by $c_\tau \cdot (b_M - a_M)$.
Note that a machine cannot process any workload nor can be closed during the setup.
However, the user is still charged for the duration of the setup.
We refer to the cost $c_\tau \cdot s_\tau$ incurred by a machine during its setup as \emph{setup cost} and to the cost $c_\tau \cdot (b_M-a_M-s_\tau)$ incurred during the remaining time it is open as \emph{processing cost}.
Without loss of generality, we assume that $\setup{B} \geq \setup{A}$, $c_A =1$ and define $c \coloneqq c_B$.
We restrict ourselves to the case $c\geq 1$ (which is consistent with practical observations where $c$ is usually in the range of $1$ to a few hundred \cite{mao12,ec2}).
However, one can obtain similar results for $c<1$ by an analogous reasoning.

\paragraph{Job Characteristics}
The workload of an instance is represented by a set $J$ of $n$ jobs, where each job $j$ is characterized by a release time $r_j \in \mathbb{R}_{\geq 0}$, a deadline $d_j \in \mathbb{R}_{>0}$, and sizes $p_{j,\tau} \in \mathbb{R}_{>0}$ for all $\tau \in \{A,B\}$, describing the processing time of job $j$ when assigned to a machine of type $\tau$.
Observe that therefore machines of the same type are considered to be identical while those of different types are unrelated.
Throughout the paper, we assume by using a suitable time scale that $p_{j,\tau} \geq 1$ for all $j \in J$ and all $\tau \in \{A,B\}$.
The jobs arrive online over time at their release times at which the scheduler gets to know the deadline and sizes of a job.
Each job needs to be completely processed by \emph{one} machine before its respective deadline, i.e., to any job $j$ we have to assign a \emph{processing interval} $I_j$ of length $p_{j,\tau}$ that is contained in $j$'s \emph{time window} $[r_j, d_j] \supseteq I_j$ on a machine $M$ of type $\tau$ that is open during the entire interval $I_j \subseteq [a_M+s_\tau, b_M]$.

A machine $M$ of type $\tau$ is called \emph{exclusive} machine for a job $j$ if it only processes $j$ 
and $b_M - a_M = s_\tau + \size{j}{\tau}$.
We define the \emph{slack} of a job $j$ as the amount by which $j$ is shiftable in its window: $\sigma_{j,\tau} \coloneqq d_j - r_j - p_{j,\tau}$ for $\tau \in \{A,B\}$.

\paragraph{Objective Function}
The objective of \PROBLEM is to rent machines and compute a feasible schedule that minimizes the rental cost, i.e., $\sum_M c_{\tau(M)} \cdot (b_M-a_M)$, where $\tau(M)$ denotes the type of machine $M$.

\paragraph{Quality Measure}
We analyze the quality of our algorithms in terms of their competitiveness and assume that problem instances are parameterized by their \emph{minimum slack}.
An instance is said to have a minimum slack of $\beta$ if $\max_\tau\{\slack{j}{\tau}\} \geq \beta$, for all $j \in J$.
Then, for a given $\beta$, an algorithm is called $\rho$-\emph{competitive} if, on any instance with minimum slack $\beta$, the rental cost is at most by a factor $\rho$ larger than the cost of an optimal offline algorithm.

In the following, by \OPT we denote an optimal schedule as well as the cost it incurs.
Throughout the paper (in all upper and lower bounds), we assume that \OPT is not too small and in particular, $\OPT =\Omega(c \cdot r_{max})$, where $r_{max} = \max_{j \in J} r_j$.
This bound is true, for example, if the optimal solution has to maintain at least one open machine (of the more expensive machine type) during (a constant fraction of) the considered time horizon.
This assumption seems to be reasonable, particularly for large-scale systems where, at any time, the decision to make is rather concerned with dozens of machines than whether a single machine is rented at all.
Similar assumptions are made in the literature \cite{azar13}, where it is argued (for identically priced machines) that the case where $\OPT = o(r_{max})$ is of minor interest as workload and costs are very small.

\subsection{Related Work}
\label{sec:relatedWork}
Cloud scheduling has recently attracted the interest of theoretical researchers.
Azar et al.~\cite{azar13} consider a scheduling problem where jobs arrive online over time and need to be processed by identical machines rented from the cloud.
While machines are paid in a pay-as-you-go manner, a fixed setup time $T_s$ is required before the respective machine is available for processing.
In this setting, Azar et al.\ consider a bicriterial optimization problem where the rental cost is to be minimized while guaranteeing a reasonable maximum delay.
An online algorithm that is $(1+ \varepsilon, O(1/\varepsilon))$-competitive regarding the cost and the maximum delay, respectively, is provided.
A different model for cloud scheduling was considered by Saha \cite{saha13}.
She considers jobs that arrive over time and need to be finished before their respective deadlines.
To process jobs, identical machines are available and need to be rented from the cloud.
The goal is to minimize the rental cost where a machine that is rented for $t$ time units incurs cost of $\lceil t /D \rceil$ for some fixed $D$. 
The problem is considered as an offline as well as an online problem and algorithms that guarantee solutions incurring costs of $O(\alpha)\OPT$ (where $\alpha$ is the approximation factor of the algorithm for machine minimization in use, see below) and $O(\log(p_{max}/p_{min}))\OPT$, respectively, are designed.

A different, but closely related problem is that of \emph{machine minimization}.
In this problem, $n$ jobs with release times and deadlines are considered and the objective is to finish each job before its deadline while minimizing the number of machines used. 
This problem has been studied in online and offline settings for the general and different special cases.
The first result is due to \cite{raghavan87} where an offline algorithm with approximation factor of $O(\log n/ \log \log n)$ is given.
This was later improved to $O(\sqrt{\log n/ \log \log n})$ by Chuzhoy et al.\ \cite{chuzhoy04}.
Better bounds have been achieved for special cases; if all jobs have a common release date or equal processing times, constant approximation factors are achieved \cite{yu09}.
In the online case, a lower bound of $\Omega(n + \log(p_{max}/p_{min}))$ and an algorithm matching this bound is given in \cite{saha13}.
For jobs of equal size, an optimal $e$-competitive algorithm is presented in \cite{devanur14}.

A further area of research that studies rental/leasing problems from an algorithmic perspective and which recently gained attention is that of \emph{resource leasing}.
Its focus is on infrastructure problems and while classically these problems are modeled such that resources are bought and then available all the time, in their leasing counterparts resources are assumed to be rented only for certain durations.
In contrast to our model, in the leasing framework resources can not be rented for arbitrary durations.
Instead, there is a given number $K$ of different leases with individual costs and durations for which a resource can be leased.
The leasing model was introduced by Meyerson \cite{meyerson05} and problems like \textsc{FacilityLeasing} or \textsc{SetCoverLeasing} have been studied in \cite{anthony07,kling12,abshoff14} afterward.

A last problem worthmentioning here is the problem of \emph{scheduling with calibrations}\cite{calibration1,calibration2}.
Although it does not consider the aspect of minimizing resources and the number of machines is fixed, it is closely related to machine minimization and shares aspects with our model.
There is given a set of $m$ (identical) machines and a set of jobs with release times, deadlines and sizes.
After a machine is calibrated at a time $t$, it is able to process workload in the interval $[t, t+T]$, for some fixed $T$, and the goal is to minimize the number of calibrations.
For sufficiently large $m$, the problem is similar to a problem we need to solve in Section~\ref{sec:tentativeSchedules}.
\subsection{Our Results}
We study algorithms for \PROBLEM where jobs need to be scheduled on machines rented from the cloud such that the rental costs are minimized subject to quality of service constraints.
Particularly, the problem introduces the possibility for a scheduler to choose between different machine types being heterogeneous in terms of prices and processing capabilities.
It also captures the fact that due to times for preparing machines and acquiring resources, available computing power does not scale instantaneously.

Our results show the competitiveness to heavily depend on the minimum slack $\beta$ jobs exhibit.
While for $\beta < \setup{B}$ no finite competitiveness is possible,
a trivial rule achieves an optimal competitiveness of $\Theta(c\setup{A} + \setup{B})$ for $\beta = (1+\varepsilon)\setup{B}$, $0 \leq \varepsilon < \nicefrac{1}{\setup{B}}$.
    For $\nicefrac{1}{\setup{B}} \leq \varepsilon \leq 1$, we present an algorithm which is up to an $O(\nicefrac{1}{\varepsilon^2})$-factor optimal.
    Its competitiveness is $O(\nicefrac{c}{\varepsilon}  + \nicefrac{1}{\varepsilon^3})$, while a general lower bound is proven to be $\Omega(\nicefrac{c}{\varepsilon})$.

\section{Preliminaries}
\label{sec:pre}
We begin our study by showing that setup times are hard to cope with for any online algorithm when the minimum slack is below a natural threshold. 
This is the case because an online algorithm needs to hold machines ready at any time for arriving jobs with small deadlines.

\begin{proposition}
\label{ob:infeasible}
If $\beta < \setup{B}$, no online algorithm has a finite competitiveness.
\end{proposition}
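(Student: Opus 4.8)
The plan is to construct, for any purported competitive ratio $\rho$, an adversarial instance on which every online algorithm incurs cost at least $\rho$ times the optimum. The key leverage is that with $\beta < \setup{B}$, a job can arrive and demand to be started strictly before any machine opened at its release time could finish its setup on type $B$ — and, by threatening to also do this on type $A$, the adversary forces the algorithm to keep expensive machinery warm speculatively, whereas the offline optimum pays almost nothing.

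The first step is to fix the gap: let $\delta > 0$ be small enough that $\beta + \delta < \setup{B}$, and consider a job $j$ with sizes chosen so that $\max_\tau \slack{j}{\tau} = \beta$ (this is legal for an instance of minimum slack $\beta$) and with deadline $d_j = r_j + \beta + p_{j,\tau^\star}$ for the type $\tau^\star$ realizing the maximum slack. If the algorithm has \emph{no} open machine that has already completed its setup at time $r_j$ and can accommodate $j$, then $j$ cannot be scheduled: any machine of type $B$ opened at or after $r_j$ is ready only at $r_j + \setup{B} > r_j + \beta + \delta$, too late, and similarly one must argue the type-$A$ option is ruled out by choosing $p_{j,A}$ large relative to $p_{j,B}$ so that the type-$A$ slack is small. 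Hence the algorithm is forced to maintain, at essentially every point in time up to $r_{max}$, at least one machine that is set up and ready. Second, I would release such tight jobs repeatedly over a long horizon $[0, r_{max}]$, at a density that guarantees the online algorithm must pay for continuously keeping machines warm over the whole horizon — a cost of $\Omega(r_{max})$ at least (and, by making the ready-machine unavoidably of type $B$ through the size choices, $\Omega(c \cdot r_{max})$). Third, and crucially, the offline optimum, knowing the (few, well-separated) release times in advance, opens an exclusive machine for each such job exactly in time, paying only $\sum_j c_{\tau}(\setup{\tau} + p_{j,\tau})$, which can be made an arbitrarily small fraction of $r_{max}$ by spacing the jobs far apart. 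Taking the number of jobs to infinity (or $r_{max} \to \infty$ with fixed per-job cost) drives the ratio above any fixed $\rho$.

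One subtlety to handle carefully is the paper's standing assumption $\OPT = \Omega(c \cdot r_{max})$: the instance must be designed so that $\OPT$ genuinely scales with $r_{max}$ — e.g. by also including a background stream of ordinary jobs (with ample slack) that any solution, online or offline, must serve at cost $\Theta(c \cdot r_{max})$ — while the \emph{extra} cost forced on the online algorithm by the tight jobs is a further unbounded multiplicative factor on top of that. Then $\OPT = \Theta(c \cdot r_{max})$ but the online cost is $\omega(c \cdot r_{max})$, so no finite competitiveness holds. The main obstacle I anticipate is precisely this bookkeeping: showing that the tight jobs force the online algorithm into a cost that is not merely additively but \emph{multiplicatively} unbounded relative to a legitimately large $\OPT$ — i.e. ruling out the escape where the algorithm amortizes the always-warm machine against the background load it would pay for anyway. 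I would resolve it by making the tight jobs' ready-machine requirement incompatible in \emph{type} or \emph{location} with what the background stream needs, so the warm-machine cost is genuinely additional; alternatively, one can dispense with $\OPT = \Omega(c r_{max})$ altogether in this particular proposition by noting the proposition is a hard impossibility (no finite ratio), for which it suffices to exhibit instances with $\OPT$ bounded by a constant while the online cost grows without bound — placing two far-apart tight jobs suffices, since between them the online algorithm must keep a machine warm for an arbitrarily long idle stretch that $\OPT$ simply skips.
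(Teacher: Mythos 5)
Your proposal is correct and takes essentially the same route as the paper: a tight job with $\slack{j}{B}=\beta<\setup{B}$ and type A ruled out by a large $\size{j}{A}$ yields the dichotomy that either the online algorithm keeps a machine warm at every time (unbounded cost against a constant-cost \OPT) or there is a gap at which the adversary releases the job and the schedule becomes infeasible --- which is exactly your closing ``two far-apart tight jobs'' variant, the paper releasing the second job adaptively at a time $t$ with no open type-B machine. The density-stream and background-load bookkeeping for $\OPT=\Omega(c\cdot r_{max})$ is unnecessary; the paper's proof does without it.
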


\begin{proof}
Suppose there is an online algorithm with competitiveness $k$, where $k$ is some constant.
We observe that there needs to be a time $t$ at which the algorithm has no machine of type B open.
This is true because otherwise, the adversary can release only one job with 
$r_1 = 0$, $\size{1}{B} = 1$, $d_1 = \setup{B} + \size{1}{B} + \beta$ and $\size{1}{A} > \setup{B}-\setup{A}+\size{1}{B}+\beta$.
Then, the cost for the optimal algorithm is only $c \cdot (\setup{B}+1)$, whereas the online algorithm has cost of at least $c \cdot t$, which is a contradiction to the competitiveness as it grows in $t$ and hence, without bound.

Therefore, let $t$ be a time where the online algorithm has no machine of type B open.
The adversary releases one job $j$ with $r_j=t$, $d_j = r_j + \size{j}{B} + \beta $ and an arbitrary size $\size{j}{B}$ and $\size{j}{A} > \size{j}{B} + \beta$.
To finish this job without violating its deadline, it needs to be started not after $t+\beta$ on a machine of type B.
However, this is not possible since there is no machine of type B available at time $t$ and the setup of a type B machine needs time $\setup{B}>\beta$.
Hence, any online algorithm must rent a machine of type B all the time, which is a contradiction.
 \end{proof}

Due to this impossibility result, we restrict our studies to cases where the minimum slack is $\beta = (1+\varepsilon)\setup{B}$ for some $\varepsilon \geq 0$.
It will turn out that for very small values of $\varepsilon$, $0\leq \varepsilon < \nicefrac{1}{\setup{B}}$, there is a high lower bound on the competitiveness and essentially, we cannot do better than processing each job on its own machine.
However, for larger values of $\varepsilon$, the situation clearly improves and leaves room for designing non-trivial online algorithms.

\begin{lemma}
\label{le:lowerBound}
If $\beta = (1+\varepsilon)\setup{B}$, $\nicefrac{1}{\setup{B}} \leq \varepsilon \leq 1$, there is a lower bound on the competitiveness of $\Omega(\nicefrac{c}{\varepsilon})$.
For $0 \leq \varepsilon < \nicefrac{1}{\setup{B}}$ a lower bound is $\Omega(c\setup{A} + \setup{B})$.
\end{lemma}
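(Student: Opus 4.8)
The plan is to prove both bounds by exhibiting, for each of the two regimes, a family of adversarial instances on which an arbitrary online algorithm is forced to pay the claimed factor above an optimal offline schedule; throughout write $s \coloneqq \setup{B}$, and — to keep the constructions consistent with the standing assumption $\OPT=\Omega(c\, r_{max})$ — each instance is replayed over the whole horizon $[0,r_{max}]$ with enough type-$B$-only \enquote{background} jobs to make the optimum run type-$B$ machines for a constant fraction of the time. For the bound $\Omega(\nicefrac{c}{\varepsilon})$ in the regime $\nicefrac1s\le\varepsilon\le1$, I would build the instance around \emph{bursts}: at a time $t$ the adversary releases $N$ identical jobs, each with $\size{j}{B}=1$, a moderate value of $\size{j}{A}$, and deadline $d_j=t+1+(1+\varepsilon)s$, so that $\max_\tau\slack{j}{\tau}=\slack{j}{B}=(1+\varepsilon)s=\beta$ and every burst job must be processed inside the window $[t,t+1+(1+\varepsilon)s]$. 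Two inefficiencies would then be stacked. First, the optimum may open machines already at $t-s$, so that a machine is ready at $t$ and works throughout the full length-$\approx(1+\varepsilon)s$ window, whereas an online algorithm learns of the burst only at $t$ and any machine it opens then is blocked by its setup for $s$ time units, so it works only in the remaining length-$\approx\varepsilon s$ part of the window — a factor $\Theta(\nicefrac1\varepsilon)$ fewer burst jobs per machine of comparable cost. Second, $\size{j}{A}$ is chosen large enough that a fresh type-$A$ machine opened \emph{at} $t$ cannot finish a burst job within its window, yet small enough that type-$A$ machines prepared earlier stay feasible and cheap for the optimum; making every inter-burst job type-$B$-only then ensures the online algorithm owns no type-$A$ machine at time $t$ and is pushed onto the factor-$c$ more expensive type $B$. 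Summing over bursts and background should give cost $\Omega((\nicefrac{c}{\varepsilon})\,\OPT)$.

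For the bound $\Omega(c\setup{A}+\setup{B})$ in the regime $0\le\varepsilon<\nicefrac1s$, the key observation is that the slack beyond a setup, $\varepsilon s<1$, lies below the minimum processing time, so no machine of either type can serve two distinct jobs; hence every job runs on a dedicated machine whose type must be committed before the job's parameters can help. I would combine two one-job gadgets and let the adversary present whichever the algorithm answers badly: one releases a single job that is cheap on a dedicated type-$A$ machine but only very expensive on type $B$, so that any algorithm reaching for type $B$ pays $\Theta(cs)=\Omega(c\setup{A})$ while $\OPT=\Theta(\setup{A})$; the other releases a single job whose window the optimum covers with a small dedicated machine of cost $\Theta(1)$ while an algorithm that, under the first gadget's threat, keeps a type-$B$ machine ready pays $\Theta(s)$, forcing $\Omega(\setup{B})$ against $\OPT=\Theta(1)$. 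Taking the worse of the two over the replayed horizon should yield $\Omega(c\setup{A}+\setup{B})$.

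The main obstacle in both regimes is that $\beta>s$, so an online algorithm always has time to set up a machine of either type on demand; unlike in Proposition~\ref{ob:infeasible}, the lower bound cannot rest on a \enquote{no machine ready} impossibility but only on forcing cost-inefficient \emph{choices} — which type, how many machines, opened when. The delicate step will be the adaptive argument showing that \emph{every} online strategy, including ones that speculatively hold idle machines or speculatively open the cheap type in anticipation of a batch, is defeated by a suitable continuation of the instance, all while keeping \OPT\ large enough to respect the standing assumption. Verifying feasibility of the claimed optimal schedules and pinning down the constants in the burst counts — how many burst jobs fit on a machine ready at $t$ versus one opened at $t$ — is the remaining, more routine, part.
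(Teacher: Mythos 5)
Your burst template points in the right direction, but in the first regime the concrete parameters cannot yield $\Omega(\nicefrac{c}{\varepsilon})$. With $\size{j}{B}=1$ and deadline $t+1+(1+\varepsilon)\setup{B}$, a type-B machine the online algorithm opens cold at $t$ still has a post-setup window of length $1+\varepsilon\setup{B}$ and so packs $\Theta(\varepsilon\setup{B})$ burst jobs, i.e.\ its per-job cost is only $\Theta(\nicefrac{c}{\varepsilon})$; on the other side, blocking a cold type-A machine forces $\size{j}{A}>1+(1+\varepsilon)\setup{B}-\setup{A}\geq 1+\varepsilon\setup{B}$, and since \OPT pays at least $\size{j}{A}$ per job on type A, its per-job cost is $\Omega(\varepsilon\setup{B})$ (or $\Theta(c)$ if it uses warm type-B machines). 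The resulting ratio is $O\bigl(\max(\nicefrac{c}{\varepsilon^2\setup{B}},\nicefrac{1}{\varepsilon})\bigr)$, which for, say, $\varepsilon=1$ and $c\leq\setup{B}$ is $O(1)$: the factors $\nicefrac1\varepsilon$ and $c$ do not multiply with unit-size jobs. The paper's construction fixes exactly this by taking $\size{j}{B}=\varepsilon\setup{B}$, $\size{j}{A}=2\varepsilon\setup{B}+\delta$ and deadline $t+(1+2\varepsilon)\setup{B}$, so a cold type-B machine fits only two jobs (a full setup wasted per $O(1)$ jobs), a cold type-A machine fits none, and \OPT packs $\Theta(\nicefrac1\varepsilon)$ jobs per warm type-A machine. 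Moreover, the ``delicate step'' you defer -- defeating algorithms that hold machines warm -- is not resolved by your background-job device; it is aggravated by it: if background jobs keep $\OPT=\Theta(c\cdot r_{max})$, the algorithm can hold warm type-A machines at rate $1$ each, an $O(\nicefrac1c)$ fraction of \OPT per machine, and serve modest bursts essentially for free. The paper instead releases a single cheap job at time $0$, waits adaptively for a time $t$ at which the algorithm has \emph{no} open machine (such $t$ exists, else the algorithm's cost is unbounded against a constant optimum, and the assumption $\OPT=\Omega(c\cdot r_{max})$ is vacuous there since $r_{max}=0$), and then releases $c\cdot t\cdot\lceil k\rceil$ burst jobs, which by itself makes $\OPT=\Omega(c\cdot r_{max})$ without any background.

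The second regime contains a more basic error: from $\varepsilon\setup{B}<1$ you conclude that no machine of either type can serve two distinct jobs, but this holds only for machines opened \emph{cold} at the burst's release time; a machine already set up when unit jobs arrive can serve $\Theta(\setup{B})$ of them inside one window of length $1+(1+\varepsilon)\setup{B}$, and this warm/cold asymmetry is precisely the source of the $\Omega(\setup{B})$ term. Built on the dedicated-machine premise, your one-job gadgets force nothing: a job's $\size{j}{A}$, $\size{j}{B}$ and $d_j$ are revealed at release and its slack exceeds $\setup{B}\geq\setup{A}$, so the algorithm can always set up the cheaper dedicated machine on demand -- it never ``reaches for type B'' in the first gadget, and it has no reason to keep a type-B machine ready in the second. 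The paper reuses the burst template here as well: after a no-open-machine time $t$ it releases $c\cdot t\cdot\lceil k\rceil$ (resp.\ $t$) jobs with $\size{j}{B}=1$, deadline $t+1+\beta$, and $\size{j}{A}$ just too large for a cold type-A machine; the cold online algorithm then fits only one job per type-B machine because the post-setup window is $1+\varepsilon\setup{B}<2$, while \OPT packs $\Theta(\setup{A})$ jobs per warm type-A machine (giving $\Omega(c\setup{A})$) or $\Theta(\setup{B})$ unit jobs per warm machine (giving $\Omega(\setup{B})$). You would need to rebuild both regimes along these lines; as written, neither construction establishes the claimed bounds.
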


\begin{proof}
We start with the case $\nicefrac{1}{\setup{B}} \leq \varepsilon \leq 1$.
Let $\setup{B} = \setup{A}$.
At $t=0$ the adversary starts the instance by releasing the first job with $\size{1}{A} = 1$, $\size{1}{B} = 1$ and $d_1 = \setup{B} +1 + \beta$.
Afterward, no job is released until time $t$ at which the online algorithm does not have any open machine.
Note that $t$ must exist as otherwise the online algorithm cannot be competitive at all.
At time $t$, the adversary releases $c\cdot t \cdot \lceil k \rceil$ jobs, where $k \coloneqq \frac{(1+2\varepsilon)\setup{B}}{2\varepsilon \setup{B} + \delta}$ for some sufficiently small $\delta > 0$, with the following properties.
All jobs have processing times $p_B = \varepsilon \setup{B}$ and $p_A = 2\varepsilon \setup{B} + \delta$. 
The deadline of all jobs is $d = t + p_B + \beta$ and hence, all jobs need to be finished by $t+(1+2\varepsilon)\setup{B}$.
Thus, an online algorithm cannot process these jobs on machines of type A and on a machine of type B opened at time $t$, it can process at most two jobs.
Hence, it has to rent $c \cdot t \cdot \lceil k \rceil/2$ machines of type B incurring cost of $\Omega(ctk \cdot c(\setup{B} + 2\varepsilon \setup{B}))$.
An offline algorithm can process $\lfloor k \rfloor$ jobs on a machine of type A and hence, all jobs on $O(ct)$ machines with cost $O(ct(\setup{A}+(1+2\varepsilon)\setup{B}))$.
This yields a competitiveness of
$
\Omega\left(\frac{kc(1+2\varepsilon) \setup{B}}{\setup{A}+(1+2\varepsilon)\setup{B}}\right) = \Omega(ck) = \Omega(c\varepsilon^{-1}).
$

Next, we consider the case $0 \leq \varepsilon < \nicefrac{1}{\setup{B}}$ and start with the bound of $\Omega(c\setup{A})$.
Let $\setup{B} = \setup{A}$.
At $t=0$ the adversary starts the instance by releasing the first job with $\size{1}{A} = 1$, $\size{1}{B} = 1$ and $d_1 = \setup{B} +1 + \beta$.
Afterward, the adversary does not release any job until time $t$ at which the online algorithm does not have any open machine.
Note that $t$ must exist as otherwise the online algorithm cannot be competitive at all.
At time $t$, the adversary releases $c\cdot t \cdot \lceil k \rceil$ jobs, where $k \coloneqq \frac{\setup{B} + 1}{1+\varepsilon \setup{B} + \delta}$, with the following properties.
All jobs have processing time $p_B = 1$ and $p_A = 1+\varepsilon \setup{B} +\delta$ for some sufficiently small $\delta >0$.
Furthermore, the deadline of all jobs is $d = t+1 + \beta$.
An online algorithm cannot process these jobs on machines of type A since setting up a machine takes too long to still be able to meet the deadlines.
Hence, it has to rent $c \cdot t \cdot \lceil k \rceil$ machines of type B incurring costs of $\Omega(ct \cdot ck(\setup{B} + 1))$.
An optimal offline algorithm can process all jobs on $O(ct)$ machines of type A with cost $O(ct\setup{B})$.
This yields a competitiveness of
$
\Omega(ck) = \Omega(c\setup{A}).
$

Next, we argue for the second bound of $\Omega(\setup{B})$, which is proven quite similar to the previous one. 
Let $\setup{A}$ and $\setup{B}$ be arbitrary.
At $t=0$ the adversary starts the instance by releasing the first job with $\size{1}{A} = 1$, $\size{1}{B} = 1$ and $d_1 = \setup{B} +1 + \beta$.
Afterward, the adversary does not release any job until time $t$ at which the online algorithm does not have any open machine.
At time $t$, the adversary releases $t$ jobs with the following properties.
All jobs have processing time $p_B = 1$ and $p_A> \setup{B} + p_B+1$.
Furthermore, the deadline of all jobs is $d = t +p_B + \beta$.
While the online algorithm needs one machine of type B per job, an optimal offline algorithm can schedule them using only $O(t/\setup{B})$ machines, each processing $\Theta(\setup{B})$ jobs.
Hence, the competitiveness is lower bounded by $\Omega\left(\frac{ct(\setup{B} +1)}{ct/\setup{B} \cdot \setup{B}}\right) = \Omega(\setup{B})$.

\end{proof}

\subsection{Simple Heuristics}
\label{subsec:greedy}
As a first step of studying algorithms in our model we discuss some natural heuristics.
One of the doubtlessly most naive rules simply decides on the machine type to process a job on based on the cost it incurs on this type.
The algorithm $A_1$ assigns each job to its own machine and chooses this machine to be of type A if $\setup{A} + \size{j}{A} \leq c(\setup{B} + \size{j}{B})$ and $\slack{j}{A} \geq \setup{A}$, or if $\slack{j}{B}<\setup{B}$. 
Otherwise, it chooses the machine to be of type B.

\begin{proposition}
\label{prop:naiveAlg}
If $\beta = (1+\varepsilon)\setup{B}$ and $0\leq \varepsilon < \nicefrac{1}{\setup{B}}$, $A_1$ is $\Theta(c\setup{A} + \setup{B})$-competitive.
For $\nicefrac{1}{\setup{B}} \leq \varepsilon \leq 1$ its competitiveness is $\Theta(\nicefrac{c}{\varepsilon} + \setup{B})$.
\end{proposition}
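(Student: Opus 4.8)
The plan is to treat the upper and lower bounds separately; the upper bound by a per‑job comparison against a processing‑cost lower bound for \OPT, and the lower bound partly by invoking Lemma~\ref{le:lowerBound} and partly by one tailored instance. For the upper bound, first I would fix the benchmark: on any machine of \OPT the processing intervals assigned to it are pairwise disjoint inside its open interval (minus the setup), so $\OPT$ is at least its processing cost, which is at least $\sum_{j\in J}c_{\tau_j}\size{j}{\tau_j}\ge\sum_{j\in J}\min_\tau c_\tau\size{j}{\tau}$, where $\tau_j$ is the type \OPT uses for $j$; every summand is at least $1$ since $p_{j,\tau}\ge1$ and $c\ge1$. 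Since $A_1$ rents one exclusive machine per job, it then suffices to bound, for each job, the ratio of $A_1$'s cost for $j$ to $c_{\tau_j}\size{j}{\tau_j}$, and to sum. I would partition the jobs by the choice of $A_1$ and the choice of \OPT. A first observation, using $\beta\ge\setup{B}$, is that whenever $\slack{j}{B}<\setup{B}$ one has $\slack{j}{A}\ge\beta$, hence $\size{j}{A}<\size{j}{B}$, hence $\setup{A}+\size{j}{A}\le c(\setup{B}+\size{j}{B})$; so $A_1$ in fact picks type~$A$ exactly when $\setup{A}+\size{j}{A}\le c(\setup{B}+\size{j}{B})$ and $\slack{j}{A}\ge\setup{A}$. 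For such jobs a short calculation gives a cost ratio of at most $1+\setup{B}=O(\setup{B})$, regardless of $\tau_j$. If instead $A_1$ picks type~$B$, then either (i) $\slack{j}{A}\ge\setup{A}$ but $\setup{A}+\size{j}{A}>c(\setup{B}+\size{j}{B})$, whence $\size{j}{A}>c(\setup{B}+\size{j}{B})-\setup{B}$ and the ratio is again $O(\setup{B})$; or (ii) $\slack{j}{A}<\setup{A}$.

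Case (ii) is the crux, and it is precisely where the minimum‑slack assumption must be used. Here $A_1$ pays $c(\setup{B}+\size{j}{B})$ and this can be far above $c_{\tau_j}\size{j}{\tau_j}$ only when \OPT schedules $j$ on type~$A$. But $\slack{j}{A}<\setup{A}$ gives $d_j-r_j<\setup{A}+\size{j}{A}$, while $\slack{j}{B}\ge\beta=(1+\varepsilon)\setup{B}$ gives $d_j-r_j\ge\size{j}{B}+(1+\varepsilon)\setup{B}$; combining, $\size{j}{A}>1+(1+\varepsilon)\setup{B}-\setup{A}$, and in particular $\size{j}{A}>\varepsilon\setup{B}$. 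Plugging this into $c(\setup{B}+\size{j}{B})/\size{j}{A}<c\setup{B}/\size{j}{A}+c$ and distinguishing $\setup{B}\le2\setup{A}$ from $\setup{B}>2\setup{A}$ yields a cost ratio of $O(c\setup{A})$; using $\size{j}{A}>\varepsilon\setup{B}$ directly yields a cost ratio of $O(\nicefrac{c}{\varepsilon})$. Summing the per‑job bounds against the \OPT lower bound gives $A_1$'s cost $O(c\setup{A}+\setup{B})\cdot\OPT$ for $0\le\varepsilon<\nicefrac1{\setup{B}}$ and $O(\nicefrac c\varepsilon+\setup{B})\cdot\OPT$ for $\nicefrac1{\setup{B}}\le\varepsilon\le1$. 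Along the way one checks that $A_1$ always produces a feasible schedule: it picks type~$A$ only when $\slack{j}{A}\ge\setup{A}$ and type~$B$ only when $\slack{j}{B}\ge\setup{B}$.

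For the lower bound, when $0\le\varepsilon<\nicefrac1{\setup{B}}$ the bound $\Omega(c\setup{A}+\setup{B})$ is immediate from Lemma~\ref{le:lowerBound}. When $\nicefrac1{\setup{B}}\le\varepsilon\le1$, Lemma~\ref{le:lowerBound} already gives $\Omega(\nicefrac c\varepsilon)$, so it remains to force $A_1$ into a ratio of $\Omega(\setup{B})$. I would let $t$ be a time at which $A_1$ has no open machine (it exists, else $A_1$ is trivially non‑competitive on a single initial job) and release $\Theta(ct)$ jobs at time $t$, all with $\size{j}{B}=1$ and window length $1+(1+\varepsilon)\setup{B}$ (so $\slack{j}{B}=\beta$), choosing $\size{j}{A}$ just large enough that $\slack{j}{A}<\setup{A}$ — this size is of order $\setup{B}$, so $A_1$'s rule forces it onto a dedicated type‑$B$ machine per job, costing $c(\setup{B}+1)$ each. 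An offline solution instead opens type‑$B$ machines already at time $t-\setup{B}$, each batching $\Theta(\setup{B})$ of these unit‑size jobs within the common window at amortized cost $O(c)$ per job, so $\OPT=\Theta(ct)$ (consistent with $\OPT=\Omega(c\,r_{max})$) while $A_1$ pays $\Theta(ct\setup{B})$. Hence the competitiveness is $\Theta(\nicefrac c\varepsilon+\setup{B})$ in this regime and $\Theta(c\setup{A}+\setup{B})$ in the other. I expect the main obstacle to be exactly case (ii) of the upper bound — showing that being forced onto the expensive type~$B$ never costs more than the claimed factor, which requires genuinely invoking $\beta=(1+\varepsilon)\setup{B}$ — together with the dual difficulty on the lower‑bound side of forcing $A_1$ onto type~$B$ at all: the cost‑based branch of its rule is useless when $c$ is large, so the choice must be driven through the deadline via $\slack{j}{A}<\setup{A}$.
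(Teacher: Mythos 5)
Your proposal is correct and takes essentially the same route as the paper: the same case split according to the pair ($A_1$'s type, \OPT's type), the same key inequality $\size{j}{A} \geq \size{j}{B} + \beta - \setup{A}$ derived from $\slack{j}{A} < \setup{A}$ together with $\slack{j}{B} \geq \beta$ (yielding $O(\setup{B})$ for the agreeing/cost-driven cases and $O(\nicefrac{c}{\varepsilon})$ resp.\ $O(c\setup{A})$ for the deadline-forced case), and the same tightness argument combining Lemma~\ref{le:lowerBound} with a batch of unit-size jobs forced onto exclusive type-B machines by $A_1$ but batched by \OPT. The only differences are presentational (a per-job comparison against \OPT's processing cost instead of the paper's ratio-of-sums bound, and a harmless factor-$c$ slip in your stated totals for the $\Omega(\setup{B})$ instance, which cancels in the ratio).
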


\begin{proof}
For $\tau, \tau' \in \{A,B\}$, let $J_{\tau,\tau'} \subseteq J$ be the set of jobs assigned to machines of type $\tau$ by $A_1$ and to machines of type $\tau'$ by \OPT.
Then, for the competitiveness of $A_1$ we obtain an upper bound of
\begin{align*}
\frac{\sum_{j \in J_{A,A}}(\size{j}{A} + \setup{A})}{\setup{A} + \sum_{j \in J_{A,A}}\size{j}{A}} &+ \frac{\sum_{j \in J_{A,B}}(\size{j}{A} + \setup{A})}{c(\setup{B} + \sum_{j \in J_{A,B}} \size{j}{B})} \\ 
&+\frac{c(\sum_{j \in J_{B,B}}(\size{j}{B}+\setup{B}))}{c(\setup{B} + \sum_{j \in J_{B,B}} \size{j}{B})}+  \frac{c(\sum_{j \in J_{B,A}}(\size{j}{B}+\setup{B}))}{\setup{A} + \sum_{j \in J_{B,A}}\size{j}{A}} \enspace .
\end{align*}
Observe that if a job $j$ is not assigned to the more cost-efficient machine type $\tau$, it holds $\size{j}{\tau} \geq \size{j}{\tau'}+\beta-\setup{\tau}$, implying $\setup{\tau'} \leq \setup{\tau}-\size{j}{\tau'} + \size{j}{\tau}$ as well.
This directly follows from the fact that $\slack{j}{\tau} < \setup{\tau}$ and $\slack{j}{\tau'} \geq \beta$.
Hence, the first three summands are bounded by $O(\setup{B})$.
Furthermore, we can bound the last summand by $s_B + \nicefrac{c}{\varepsilon}$ when $\varepsilon \setup{B} \geq 1$ and by $c\setup{A}$ otherwise.

This bound can easily be seen to be tight.
Consider an instance where all jobs need to be processed on machines of type B with sizes $\size{j}{B} = 1 \enspace \forall j \in J$ and let them arrive such that all can be assigned to one machine.
Then the above term $\frac{c(\sum_{j \in J_{B,B}}(\size{j}{B} +\setup{B}))}{c(\setup{B} + \sum_{j \in J_{B,B}} \size{j}{B})}$ is lower bounded by $\frac{1}{2} \setup{B}$ for $n\geq \setup{B}$.
Together with the lower bound from Lemma~\ref{le:lowerBound}, this proves the proposition.
 \end{proof}

While this trivial rule is optimal for $0 \leq \varepsilon < \nicefrac{1}{\setup{B}}$, for larger values of $\varepsilon$ the dependence of the competitiveness on the setup time is undesired as it can be quite high and in particular, it is sensitive to the time scale (and recall that we chose a time scale such that the smallest processing time of any job is at least $1$).
Therefore, the rest of the paper is devoted to finding an algorithm with a competitiveness being independent of $\setup{B}$ and narrowing the gap to the lower bound of $\Omega(\nicefrac{c}{\varepsilon})$ for $\nicefrac{1}{\setup{B}} \leq \varepsilon \leq 1$.

One shortcoming of $A_1$ is the fact that jobs are never processed together on a common machine.
A simple idea to fix this and to batch jobs is to extend $A_1$ by an \textsc{AnyFit} rule as known from bin packing problems (e.g.\ see \cite{sgall14}).
Such an algorithm dispatches the jobs one by one and only opens a new machine if the job to be assigned cannot be processed on any already open machine. 
Then, the job is assigned to some machine it fits into.
It turns out that the competitiveness of this approach still depends on the setup time $\setup{B}$, which we show by proving a slightly more general statement.
Consider the class \textsc{GreedyFit} consisting of all deterministic heuristics $\textsc{Alg}$ fitting into the following framework:

\begin{enumerate}
  \item At each time $t$ at which the jobs $J_t = \{j \in J : r_j = t\}$ arrive, $\textsc{Alg}$ processes them in \emph{any} order $j_1, j_2, \ldots$
  \begin{enumerate}[label={\arabic{enumi}.\arabic*}] 
    \item If $j_i$ cannot \emph{reasonably} be processed on any open machine, $\textsc{Alg}$ opens \emph{some} machine.
    \item $\textsc{Alg}$ assigns $j_i$ to \emph{some} open machine.
  \end{enumerate}
  \item $\textsc{Alg}$ \emph{may} close a machine as soon as it is about to idle.
\end{enumerate}
The terms \emph{some}, \emph{any} and \emph{may} above should be understood as to be defined by the concrete algorithm.
A job is said to be \emph{reasonably} processable on a machine $M$ if it does not violate any deadline and the processing cost it incurs does not exceed the cost for setting up and processing the job on a new machine.
Unfortunately, one can still construct bad instances leading to Proposition~\ref{th:greedyfit}.
\begin{proposition}
\label{th:greedyfit}
If $\beta \geq \setup{B}$, any \textsc{GreedyFit} algorithm is $\Omega(\nicefrac{c}{\varepsilon} + \setup{B}$)-competitive.
\end{proposition}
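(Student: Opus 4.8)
The plan is to prove the two terms $\Omega(\nicefrac c\varepsilon)$ and $\Omega(\setup B)$ separately, each by an adaptive adversary that reuses the constructions behind Lemma~\ref{le:lowerBound} but additionally exploits one structural feature of every \textsc{GreedyFit} algorithm: a machine is opened only in reaction to a job that fits nowhere (step~1.1), never ahead of time, so at any moment at which \textsc{Alg} has no machine open, \emph{every} machine it opens thereafter has $a_M\ge$ that moment. As in the proofs of Proposition~\ref{ob:infeasible} and Lemma~\ref{le:lowerBound}, I first release a single job at time $0$ and then nothing until a time $t$ at which \textsc{Alg} holds no open machine; such a $t$ must exist, since otherwise \textsc{Alg} keeps paying indefinitely while \OPT stays bounded. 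In each instance below we then have $r_{max}=t$ and, as one checks case by case, $\OPT=\Omega(c\,t)$, so the standing assumption $\OPT=\Omega(c\,r_{max})$ is met.

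For the $\Omega(\nicefrac c\varepsilon)$ bound I would, at such a time $t$, release the batch from the first construction of Lemma~\ref{le:lowerBound}: with $\setup A=\setup B$, the $c\,t\,\lceil k\rceil$ jobs with $\size jB=\varepsilon\setup B$, $\size jA=2\varepsilon\setup B+\delta$ and common deadline $t+(1+2\varepsilon)\setup B$, where $k=\tfrac{(1+2\varepsilon)\setup B}{2\varepsilon\setup B+\delta}$. Since \textsc{Alg} has no open machine at $t$, every machine it uses has $a_M\ge t$; a type-$A$ machine opened that late cannot finish even one such job before its deadline, and a type-$B$ machine opened that late is ready only at time $\ge t+\setup B$, hence has a processing window of length at most $2\varepsilon\setup B$ and can hold at most two of the batch jobs. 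Thus after every two batch jobs the ``reasonably processable'' predicate fails on all open machines, so \textsc{Alg} opens $\Omega(c\,t\,k)$ type-$B$ machines, each of cost $\Omega(c\,\setup B)$, whereas \OPT opens type-$A$ machines $\setup A$ time units before $t$, packs $\lfloor k\rfloor$ jobs per machine, and pays $O(c\,t\,\setup B)$ overall. This gives a ratio $\Omega(c\,k)=\Omega(\nicefrac c\varepsilon)$.

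For the $\Omega(\setup B)$ bound I would again start from such a time $t$ and release a batch of $t$ unit-size, type-$A$-infeasible jobs (i.e.\ $\size jA$ so large that $\setup A+\size jA$ exceeds the available window) with common deadline $t+1+\beta$. Every type-$B$ machine \textsc{Alg} opens has $a_M\ge t$, is ready only at $\ge t+\setup B$, so it has a processing window of length $1+\varepsilon\setup B$ and holds only $O(1+\varepsilon\setup B)$ jobs; \OPT opens its type-$B$ machines $\setup B$ before $t$, so each has a window of length $1+(1+\varepsilon)\setup B$ and holds $\Theta(\setup B)$ jobs, and \OPT needs only $O(\nicefrac t{\setup B})$ machines. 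Comparing costs yields a ratio $\Omega\!\bigl(\setup B/(1+\varepsilon\setup B)\bigr)$, which is $\Omega(\setup B)$ whenever $\varepsilon\le\nicefrac1{\setup B}$. For larger $\varepsilon$ I would instead force \textsc{Alg} to open a fresh -- hence only $\Theta(\nicefrac1{\setup B})$-utilised -- type-$B$ machine for each job of a long, thin stream of unit-size, type-$A$-infeasible jobs released just under one time unit apart with the minimum feasible deadlines $d_i=r_i+1+\beta$, while \OPT, seeing the whole stream, opens its type-$B$ machines early enough to pack $\Theta(\setup B)$ unit jobs into each and pays only $\Theta(c)$ per job. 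Combining these bounds with the $\Omega(\nicefrac c\varepsilon)$ bound proves the proposition.

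The main obstacle is that the \textsc{GreedyFit} framework grants the algorithm considerable freedom: the order in which simultaneously released jobs are processed, the exact opening time $a_M$ of a newly opened machine inside its feasible range, the position of each processing interval within a machine's open interval, and the moment at which an idle machine is closed. The adversary must therefore be adaptive and, for each concrete algorithm, certify that the forcing survives every such choice; in particular, at the arrival of each job of the thin stream one must check that \emph{both} clauses of ``reasonably processable'' -- no deadline violation, and incurred processing cost at most that of a fresh machine -- fail on every currently open machine, which requires tracking how much of each machine's open interval is already committed and how far an idle machine could still be profitably kept open. This bookkeeping, rather than any single gadget, is the technical core: a naive choice of the stream's release times and deadlines only defeats those \textsc{GreedyFit} variants that open machines as late as possible, and extending it to all variants is where the care is needed.
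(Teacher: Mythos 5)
Your $\Omega(\nicefrac{c}{\varepsilon})$ part is unproblematic (it is just the general lower bound of Lemma~\ref{le:lowerBound}, which applies to any online algorithm and is also what the paper cites), but the $\Omega(\setup{B})$ part has a genuine gap exactly in the regime that matters. Your batch construction yields $\Omega\bigl(\setup{B}/(1+\varepsilon\setup{B})\bigr)$, i.e.\ $\Omega(\setup{B})$ only for $\varepsilon\le\nicefrac{1}{\setup{B}}$ -- a regime already covered by Lemma~\ref{le:lowerBound} for \emph{all} online algorithms. For $\varepsilon\ge\nicefrac{1}{\setup{B}}$ your ``thin stream'' does not work, and no amount of bookkeeping will repair it: once the first stream job forces a type~B machine open, every subsequent unit-size job \emph{is} reasonably processable on that machine (its deadline $r_i+1+\beta$ leaves room of at least $\varepsilon\setup{B}\ge 1$ beyond the machine's current commitment, and its processing cost $c\cdot 1$ is far below the cost $c(\setup{B}+1)$ of a fresh machine), so by rule~1.1 \emph{no} \textsc{GreedyFit} algorithm opens a new machine; it batches the stream onto the open machine, which never idles because jobs arrive as fast as they are processed, and ends up within a constant factor of \OPT on that instance. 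The difficulty you flag in your last paragraph is therefore not a matter of handling algorithmic freedom -- the construction fails against essentially every \textsc{GreedyFit} variant, because the whole point of \textsc{GreedyFit} is to reuse open machines, and reactive opening alone cannot be punished by a factor $\setup{B}$ when the slack exceeds $\setup{B}$ by $\varepsilon\setup{B}\geq 1$.

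The paper's proof exploits a different weakness: the forced greedy \emph{assignment} onto a machine of the wrong type. With $\setup{A}=1$, $c=1$, a first job $j_0$ with $d_{j_0}=\setup{B}^2$, $\size{j_0}{A}=\setup{B}$ is released, and since the algorithm is deterministic one distinguishes whether some value $x$ of $\size{j_0}{B}$ makes it open a type~A machine. If yes, set $\size{j_0}{B}=x$ and release $\setup{B}-1$ further jobs with $\size{j}{A}=\setup{B}$, $\size{j}{B}=1$, $d_j=\setup{B}^2$: each is \emph{reasonably} processable on the open type~A machine (processing cost $\setup{B}$ versus new-machine cost $\setup{B}+1$, and the deadline is loose), so \textsc{GreedyFit} must pack them all there and pays about $\setup{B}^2$, while \OPT uses a type~B machine and pays $O(\setup{B})$, giving $\Omega(\setup{B})$ for any $\varepsilon$ in the considered range (and keeping $\OPT=\Omega(c\cdot r_{max})$ since $r_{max}=\varepsilon$). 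If no such $x$ exists, the algorithm opens a type~B machine for $j_0$ regardless of $\size{j_0}{B}$, so taking $\size{j_0}{B}$ arbitrarily large makes its cost unbounded relative to \OPT. This adversarial use of the ``reasonably processable'' predicate -- luring the algorithm into batching cheap-on-B jobs onto an expensive type~A machine -- is the key idea missing from your proposal.
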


\begin{proof}
Fix an arbitrary $\setup{B}$ and set $\setup{A}=1$ and $c=1$.
At time $t=0$ release a job $j_0$ with $d_{j_0} = \setup{B} ^2$ and $\size{j_0}{A} = \setup{B}$.
Since the considered algorithm is deterministic and only $\size{j_0}{B}$ is not yet fixed, we can now determine and distinguish the following two cases: 
a) $\exists x \geq 1$ such that if $\size{j_0}{B} = x$, $\textsc{Alg}$ opens a machine of type $A$, and the alternative case
b) such an $x$ does not exist. 
We start with the former case
and set $\size{j_0}{B} = x$.
At time $t = \varepsilon$, $\setup{B} - 1$ many jobs $j$ with $\size{j}{A} = \setup{B}$, $\size{j}{B}  = 1$ and $d_j = \setup{B}^2$ are released.
On this instance, $\textsc{Alg}$ schedules all jobs on one machine of type A, incurring cost of $\setup{A} + \setup{B} ^2$.
Depending on whether $x<\setup{A} + \setup{B}$ or not, $\OPT$ schedules all jobs on one machine of type B or one machine of type A and one of type B.
In both cases it holds $c \cdot r_{max} \leq \OPT < 4\setup{B}$, proving this case.
In case b), $\textsc{Alg}$ always uses a machine of type B.
Hence, we can set $\size{j_0}{B}$ arbitrarily high and together with the lower bound from Lemma~\ref{le:lowerBound}, this concludes the proof.
 \end{proof}

\section{An $O(\nicefrac{c}{\varepsilon}+\nicefrac{1}{\varepsilon^3}$)-competitive Algorithm}
Due to the discussed observations, it seems that decisions on the type of a machine to open as well as finding a good assignment of jobs requires more information than given by a single job.
The rough outline of our approach is as follows. 
In a first step, we identify a variant of \PROBLEM that we can solve by providing an Integer Linear Program (ILP). 
This formulation is heavily based on some structural lemma that we prove next before turning to the actual ILP.
In a second step, we describe how to use the ILP solutions, giving infeasible subschedules, to come up with a competitive algorithm.
Throughout the description we make the assumption that $\varepsilon$ is known to the algorithm in advance, which, however, can easily be dropped by maintaining a guess on $\varepsilon$.

We now show a fundamental lemma that provides a way of suitably batching the processing of jobs and structuring the rental intervals.
We define intervals of the form $[is_\tau, (i+1)s_\tau)$, for $i \in \mathbb{N}_0$ and $\tau \in \{A,B\}$, as the \emph{$i$-th $\tau$-interval}.
Intuitively, we can partition the considered instance into subinstances each consisting only of jobs released during one $B$-interval and the times during which machines are open are aligned with these $\tau$-intervals.

\begin{lemma}
\label{le:intervals}
By losing a constant factor, we may assume that \OPT fulfills the following properties:
\begin{enumerate}
  \item each job $j$ that is processed on a machine of type $\tau$ and fulfills $p_{j,\tau} \geq s_\tau$ is assigned to an exclusive machine,
  \item each remaining machine $M$ of type $\tau$ is open for exactly five $\tau$-intervals, i.e., $[a_M, b_M) = [is_\tau, (i+5)s_\tau)$ for some $i \in \mathbb{N}_0$, and
  \item if it is opened at $a_M = (i-1)s_\tau$, the beginning of the $(i-1)$-th $\tau$-interval, it only processes jobs released during the $i$-th $\tau$-interval.
\end{enumerate}
\end{lemma}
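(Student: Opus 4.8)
The plan is to take an arbitrary optimal schedule and reshape it into one satisfying properties (1)--(3), paying only an absolute constant factor. Call a job $j$ \emph{large} if \OPT processes it on a machine of type $\tau$ with $p_{j,\tau}\geq s_\tau$, and \emph{small} otherwise. For a large job $j$, let $I_j\subseteq[a_M+s_\tau,b_M]$ be the interval in which \OPT runs it on its type-$\tau$ machine $M$; I open a fresh type-$\tau$ machine over $[\min I_j-s_\tau,\ \max I_j]$ and process $j$ on it during $I_j$. This machine is exclusive for $j$, has duration $s_\tau+p_{j,\tau}\leq 2p_{j,\tau}$, opens at a time $\geq a_M\geq 0$, and gives a feasible placement since $I_j\subseteq[r_j,d_j]$. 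As the processing intervals of the large jobs on any single \OPT-machine are disjoint, $\sum_{j\text{ large}}c_{\tau(j)}p_{j,\tau(j)}\leq\OPT$, so all these exclusive machines together cost at most $2\OPT$. I then discard \OPT's machines and am left only with (re)scheduling the small jobs; the construction just described establishes property (1).

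For the small jobs I group them by the pair $(\tau(j),i(j))$, where $\tau(j)$ is the type chosen by \OPT and $i(j)=\lfloor r_j/s_{\tau(j)}\rfloor$ indexes the $\tau(j)$-interval containing $r_j$; write $G_{\tau,i}$ for such a group. Each machine I create for $G_{\tau,i}$ is open over the five $\tau$-intervals $[(i-1)s_\tau,(i+4)s_\tau)$ (the case $i=0$ requires only a trivial modification), so its processing window is $[is_\tau,(i+4)s_\tau)$, and since $r_j\geq is_\tau$ every job of the group can start as early as its release time; this establishes properties (2) and (3). Split $G_{\tau,i}$ into the \emph{urgent} jobs, with $d_j\leq(i+4)s_\tau$, and the \emph{loose} jobs, with $d_j>(i+4)s_\tau$. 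An urgent job is processed by \OPT inside $[r_j,d_j]\subseteq[is_\tau,(i+4)s_\tau)$, so I copy \OPT at the machine level: for every type-$\tau$ \OPT-machine carrying at least one urgent job of $G_{\tau,i}$ I open one new machine of the shape above and place its urgent jobs exactly as \OPT did, which is a sub-schedule of a feasible schedule lying inside the new processing window and hence feasible. A loose job satisfies $r_j<(i+1)s_\tau$, $d_j>(i+4)s_\tau$ and $p_{j,\tau}<s_\tau$, so it fits anywhere in $[(i+1)s_\tau,(i+4)s_\tau)$; I fill further new machines greedily with the loose jobs inside that sub-window, using $O(1+P_{\tau,i}/s_\tau)$ of them, where $P_{\tau,i}$ is the total processing time of the loose jobs in $G_{\tau,i}$.

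Each new machine for a group $(\tau,i)$ costs $5c_\tau s_\tau$, and I charge them in three parts. (i) The urgent machines number at most the type-$\tau$ \OPT-machines meeting $[is_\tau,(i+4)s_\tau)$; summing over $i$, a fixed such \OPT-machine $M$ is counted for $O((b_M-a_M)/s_\tau+1)$ indices, so the total cost here is a constant times $\OPT$ plus a constant times the setup cost of \OPT, which is $O(\OPT)$. (ii) The loose machines of group $(\tau,i)$ cost $O(c_\tau s_\tau+c_\tau P_{\tau,i})$, and the terms $c_\tau P_{\tau,i}$ sum to $O(\sum_\tau c_\tau W_\tau)=O(\OPT)$, where $W_\tau$ is the total processing time \OPT spends on type-$\tau$ machines. (iii) There remains one $O(c_\tau s_\tau)$ term per non-empty group of type $\tau$; since $i(j)\leq r_{max}/s_\tau$ there are at most $r_{max}/s_\tau+1$ such groups, contributing $O(c_\tau r_{max}+c_\tau s_\tau)=O(c\,r_{max}+\OPT)=O(\OPT)$ by the standing assumption $\OPT=\Omega(c\,r_{max})$. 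Adding the $2\OPT$ from the large jobs, the reshaped schedule costs $O(\OPT)$.

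I expect step (iii) to be the real obstacle: without it, a job released long before \OPT gets around to processing it would force a fresh, almost-empty machine near its release interval, and peeling off the loose jobs and invoking $\OPT=\Omega(c\,r_{max})$ is precisely what keeps this overhead affordable; the remaining steps are routine bookkeeping. It is worth noting that the minimum-slack parameter $\beta$ is not used anywhere in this argument -- it enters only later, when the online algorithm has to reproduce the structure established here.
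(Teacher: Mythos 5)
Your argument is correct and establishes the same structure, but it gets there by a genuinely different route than the paper. The paper transforms \OPT in stages: after moving large jobs to exclusive machines (exactly as you do), it splits every remaining machine according to length-$s_\tau$ blocks of starting times and pads the pieces to aligned intervals (Property~2), and only then enforces the release-interval property via a case analysis on the number $\#(M)$ of release intervals a machine serves, extracting the jobs released long before their machine opened and repacking them on fresh aligned machines using the observation $d_j \geq a_M + s_\tau$, with the per-interval additive $O(c_\tau s_\tau)$ absorbed by $\OPT = \Omega(c\cdot r_{max})$. You instead rebuild the schedule of the small jobs in one pass, grouping them by type and release interval and splitting each group by deadline: urgent jobs ($d_j \le (i+4)s_\tau$) are copied machine-by-machine from \OPT into the aligned five-interval window, loose jobs are greedily repacked into their guaranteed sub-window $[(i+1)s_\tau,(i+4)s_\tau)$. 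Your urgent/loose dichotomy thus replaces both the start-time-block splitting and the $\#(M)$ case analysis, and your three-part charging (overlap count against each \OPT machine's duration, processing volume, and $\OPT=\Omega(c\cdot r_{max})$ for the one-machine-per-nonempty-group term) draws on exactly the same resources the paper charges, just bundled more directly; the paper's staged version, in turn, produces intermediate structure that it reuses verbatim for Lemma~\ref{le:j2mm}, Lemma~\ref{le:j3mm} and the ILP. Two points you gloss over are harmless: the $i=0$ boundary (opening at $(i-1)s_\tau<0$), which the paper treats equally loosely, and the claim $c_\tau s_\tau = O(\OPT)$ in your step~(iii), which holds because a nonempty group of type $\tau$ forces \OPT to pay at least one type-$\tau$ setup.
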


\begin{proof}
We prove the lemma by describing how to modify \OPT to establish the three properties while increasing its cost only by a constant factor.
Consider a fixed type $\tau \in \{A,B\}$ and 
let $J_\tau \subseteq J$ be the set of jobs processed on machines of type $\tau$ in \OPT. 
Let $J_M$ be the set of jobs processed by a machine $M$.

\emph{Property 1. }
Any job $j \in J_\tau$ with $p_{j,\tau} \geq s_\tau$ is moved to a new exclusive machine of type $\tau$ if not yet scheduled on an exclusive machine in \OPT.
This increases the cost due to an additional setup and the resulting idle time by $s_\tau + p_{j,\tau} \leq 2 p_{j,\tau}$.
Applying this modification to all jobs $j$ with $p_{j,\tau} \geq s_\tau$ therefore increases the cost of \OPT by a factor of at most three and establishes the desired property.
From now on, we will assume for simplicity that all jobs $j$ scheduled on a machine of type $\tau$ fulfill $p_{j,\tau} < s_\tau$.

\emph{Property 2. }
Next, we establish the property that each remaining machine is open for exactly four $\tau$-intervals;
when establishing the third property, this is extended to five intervals as claimed in the lemma.
Consider a fixed machine $M$.
Partition the time during which $M$ is open into intervals of length $s_\tau$ by defining $I_k \coloneqq [a_M+ks_\tau, a_M+(k+1)s_\tau)$ for $k\in \{1,\ldots, \lceil(b_M-a_M)/s_\tau -1\rceil\}$.
Let $t_j$ be the starting time of job $j \in J_M$ and let $T_k = \{j \in J_M : t_j \in I_k\}$ partition the jobs of $J_M$ with respect to the interval during which they are started.
We replace machine $M$ by $\lceil(b_M-a_M)/s_\tau -1\rceil$ machines $M_1,M_2, \ldots$ such that $M_i$ processes the jobs from $T_i$ keeping the jobs' starting times as given by \OPT and setting $a_{M_i} \coloneqq \min_{j \in T_i} t_j-s_\tau$ and $b_{M_i} \coloneqq \max_{j \in T_i} (t_j + \size{j}{\tau})$.
Observe that the cost originally incurred by $M$ is $c_\tau(b_M - a_M)$ while those incurred by the replacing machines $M_1, M_2, \ldots$ are at most $c_\tau(b_M - a_M) + c_\tau(\lceil(b_M - a_M)/s_\tau -1\rceil)s_\tau \leq 2 c_\tau(b_M -a_M)$ where the additional term stems from the additional setups. 
Since it holds that $p_{j,\tau} < s_\tau$ for all $j \in J_M$, we conclude that $b_{M_i}-a_{M_i} \leq 3s_\tau$ for all $M_i$.
For each $M_i$, we can now simply decrease $a_{M_i}$ and increase $b_{M_i}$ such that $M_i$ is open for exactly four $\tau$-intervals.
This maintains the feasibility and increases the cost incurred by each machine $M_i$ from at least $c_\tau s_\tau$ to at most $4c_\tau s_\tau$.
Applying these modifications to all machines $M$, establishes the claimed property while increasing the overall cost by a factor of at most eight.

\emph{Property 3. }
It remains to prove the third property.
Again consider a fixed machine $M$.
Let $N_{i,M} \coloneqq \{j \in J_M : r_j \in [is_\tau, (i+1)s_\tau)\}$, $i \in \mathbb{N}_0$, be the (sub-)set of jobs released during the $i$-th $\tau$-interval and processed by $M$.
Furthermore, let $\#(M) \coloneqq |\{i : N_{i,M} \neq \emptyset\}|$ be the number of  $\tau$-intervals from which $M$ processes jobs.
If $\#(M) \leq 3$, we replace $M$ by $\#(M)$ machines $M_i$ each processing only jobs from $N_{i,M}$.
To define the schedule for $M_i$, let $N_{i,M} = \{j_{i_1}, j_{i_2}, \ldots\}$ such that $t_{i_1} < t_{i_2} < \ldots$ holds.
We reset the starting times of the jobs in $N_{i,M}$ by setting $t_{i_1} \coloneqq r_{i_1}$ and $t_{i_k} \coloneqq \max\{r_{i_k}, t_{i_{k-1}}+\size{i_{k-1}}{\tau}\}$, for $k>1$, and set $a_M \coloneqq (i-1)s_\tau$ and $b_M \coloneqq a_M+5s_\tau$.
This gives a feasible schedule for $N_{i,M}$ since no starting time is increased and according to Property~2 we have $\sum_{j \in N_{i,M}} p_{j,\tau} \leq 3s_\tau$ and hence, $\max_{j \in N_{i,M}} r_j + \sum_{j \in N_{i,M}} p_{j,\tau} < (i+4)s_\tau = b_M$.
Also, the replacing machines fulfill the three properties and the cost is increased by a factor less than four.

For the complementary case where $\#(M) > 3$, we argue as follows.
Observe that due to the already established property that $M$ is open for exactly four $\tau$-intervals, 
$N_{i,M} = \emptyset$ for all $i \geq a_M/s_\tau+4$.
Note that $a_M/s_\tau \in \mathbb{N}_0$ due to Property~2.
Also, for each $i \in \{a_M/s_\tau-1, a_M/s_\tau, a_M/s_\tau+2,a_M/s_\tau+3 \}$, we can move the jobs from $N_{i,M}$ to new machines by an argument analogous to the one given for the case $\#(M) \leq 3$ above.
Hence, we have established the property that $M$ only processes jobs $j \in  \bigcup_{i=0}^{a_M/s_\tau-2} N_{i,M} \cup N_{a_M/s_\tau+1,M}$ and applying the modification to all machines $M$ from the set $\mathcal{M}$ of machines fulfilling $\#(M) > 3$, increases the cost by a constant factor.

In order to finally establish the third property, our last step proves how we can reassign all those jobs $j$ with $j \in \bigcup_{i=0}^{a_M/s_\tau-2} N_{i,M}$ for some $M \in \mathcal{M}$.
Let $J' = \bigcup_{M \in \mathcal{M}}\bigcup_{i=0}^{a_M/s_\tau-2} N_{i,M}$ be the set containing these jobs and partition them according to their release times by defining $N_i = (\bigcup_{M \in \mathcal{M}} N_{i,M}) \cap J'$.
Note that for any $j \in N_i$ it holds $d_j \geq a_M + \setup{\tau} \geq (i+3)s_\tau$, for all $i \in \mathbb{N}_0$.
Let $w_i \coloneqq \sum_{j \in N_i} p_{j,\tau}$.
We can assign all jobs from $N_i$ to new machines fulfilling the three properties as follows:
We open $\lceil w_i/s_\tau \rceil$ new machines at time $(i-1)s_\tau$ and keep them open for exactly five $\tau$-intervals.
Due to the fact that for all jobs $j \in N_i$ it holds $r_j\leq (i+1)s_\tau$ and $d_j \geq (i+3)s_\tau$, we can accommodate a workload of at least $s_\tau$ in the interval $[(i+1)s_\tau, (i+3)s_\tau]$ on each machine by assigning jobs from $N_i$ to it in any order.
By these modifications the cost increase due to cases where $w_i \geq s_\tau$ is given by an additive of at most $c_\tau \sum_{i : w_i \geq s_\tau}\lceil w_i/s_\tau\rceil5s_\tau$, which is $O(\OPT)$ since $\OPT \geq c_\tau \sum_{i : w_i \geq s_\tau}w_i$.
The overall increase in the cost due to the cases where $w_i < s_\tau$ is given by an additive of $c_\tau \sum_{i : w_i <s_\tau} 5s_\tau = O(\OPT)$
due to our assumption $\OPT = \Omega(c \cdot r_{max})$.
 \end{proof}

By Lemma~\ref{le:intervals}, we can partition any instance into subinstances such that the $i$-th subinstance consists of those jobs released during the interval $[i \setup{B}, (i+1)\setup{B})$ and solve
them separately.
Hence, we assume without loss of generality that the entire instance only consists of 
jobs released during the interval $[0,\setup{B})$.

\subsection{Tentative Subschedules}
\label{sec:tentativeSchedules}
Before we turn to describing our algorithm for \PROBLEM, we first identify and describe a special problem variant that will be helpful when designing our algorithm.
In this variant, we assume setup costs instead of setup times, i.e., setups do not take any time but still incur the respective cost.
Under this relaxed assumption, the goal is to compute a schedule for a whole batch of (already arrived) jobs.
Precisely, for a given time $t$, we consider a subset $J' \subseteq J$ of jobs such that $r_j \leq t$ for all $j \in J'$ and our goal is to compute a schedule for $J'$ that minimizes the rental cost and in which each job is finished at least $\setup{B}$ time units before its deadline (has \emph{earliness} at least $\setup{B}$).
We do not require the starting times of jobs to be at least $t$ and hence, a resulting schedule may be infeasible for \PROBLEM if realized as computed.
Therefore, to emphasize its character of not being final, we call such a schedule a \emph{tentative schedule}.

We now divide the jobs into three sets according to their sizes.
Let $J_1 \coloneqq \{ j \in J : \size{j}{A} \geq \setup{A} \wedge \size{j}{B} \geq \setup{B}\}$ contain those jobs for which the processing cost dominates the setup cost on both machine types.
The set $J_2 \coloneqq \{ j \in J : \size{j}{A} < \setup{A} \wedge  \size{j}{B} < \setup{B}\}$ contains those jobs for which the processing cost does not dominate the setup cost on either of the two machine types. 
Finally, $J_3 \coloneqq J \setminus (J_1 \cup J_2)$ and we write $ J_3 = J_{3,1} \cup J_{3,2} = \{ j \in J_3 : \size{j}{A} \geq \setup{A} \wedge \size{j}{B} < \setup{B}\} \cup  \{ j \in J_3 :\size{j}{A} < \setup{A} \wedge \size{j}{B} \geq \setup{B}\}$.
By Lemma~\ref{le:intervals}, we directly have the following lemmas.

\begin{lemma}
\label{le:j2mm}
For an optimal schedule for jobs from $J_2$, we may assume each machine of type $\tau \in \{A,B\}$ to be open for exactly
five $\tau$-intervals.
If a job $j \in J_2$ is processed on a machine of type $\tau$ and $r_j \in [is_\tau,(i+1)s_\tau)$, $i \in \mathbb{N}_0$, its processing interval is completely contained in the interval $[is_\tau,(i+4)s_\tau]$.
\end{lemma}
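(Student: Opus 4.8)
The plan is to obtain Lemma~\ref{le:j2mm} as an immediate specialization of Lemma~\ref{le:intervals} to the instance restricted to the jobs of $J_2$. The crucial observation is that, by the very definition of $J_2$, every job $j \in J_2$ satisfies $\size{j}{A} < \setup{A}$ and $\size{j}{B} < \setup{B}$ simultaneously, so whichever type $\tau$ a job of $J_2$ is processed on in an optimal schedule, we have $\size{j}{\tau} < \setup{\tau}$. Hence Property~1 of Lemma~\ref{le:intervals} is vacuous here — there is no job that would be moved onto an exclusive machine — and after the (constant-factor) modification behind Lemma~\ref{le:intervals}, \emph{every} machine used is one of the ``remaining'' machines governed by Properties~2 and~3.

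Next I would invoke Property~2 to conclude that each machine $M$ of type $\tau$ is open for exactly five $\tau$-intervals, i.e.\ $[a_M,b_M) = [k\setup{\tau},(k+5)\setup{\tau})$ for some $k \in \mathbb{N}_0$; this already gives the first claim. Then, using the construction behind Property~3, each such $M$ processes only jobs released during a single $\tau$-interval, say the $i$-th one, and is opened at $a_M = (i-1)\setup{\tau}$, so that $b_M = (i+4)\setup{\tau}$. Since $M$ is available for processing precisely during $[a_M + \setup{\tau}, b_M] = [i\setup{\tau},(i+4)\setup{\tau}]$, any job $j \in J_2$ with $r_j \in [i\setup{\tau},(i+1)\setup{\tau})$ processed on $M$ has $I_j \subseteq [a_M+\setup{\tau}, b_M] = [i\setup{\tau},(i+4)\setup{\tau}]$, which is the second claim. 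The ``we may assume'' qualifier, i.e.\ the constant-factor loss, is inherited directly from Lemma~\ref{le:intervals}.

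The only point requiring a little care — the closest thing to an obstacle — is to confirm that restricting the argument to $J_2$ is legitimate: the modifications in the proof of Lemma~\ref{le:intervals} are performed type by type and machine by machine and never recombine jobs in a way that could reintroduce a job with $\size{j}{\tau} \geq \setup{\tau}$, so running that argument on $J_2$ alone is sound. One should also note that Property~3, read through its construction, really asserts that a machine processing jobs released in the $i$-th $\tau$-interval is opened exactly at $(i-1)\setup{\tau}$, so that $b_M = (i+4)\setup{\tau}$ is forced rather than merely permitted. Beyond these bookkeeping remarks, the lemma needs no new estimates.
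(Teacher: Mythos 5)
Your proposal is correct and matches the paper's approach: the paper gives no separate proof, stating that Lemma~\ref{le:j2mm} follows directly from Lemma~\ref{le:intervals}, which is exactly the specialization you spell out (Property~1 is vacuous for $J_2$ since $\size{j}{\tau} < \setup{\tau}$ on both types, and Properties~2 and~3 yield the five-interval structure and the containment in $[i\setup{\tau},(i+4)\setup{\tau}]$).
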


\begin{lemma}
\label{le:j3mm}
For an optimal schedule for jobs from $J_{3,1}$, we may assume each machine of type $B$ to be open for exactly
five $B$-intervals.
If a job $j \in J_{3,1}$ is processed on a machine of type $A$, it is processed on an exclusive machine and if $j$ is processed on a machine of type $B$ and $r_j \in [i\setup{B},(i+1)\setup{B})$, $i \in \mathbb{N}_0$, its processing interval is completely contained in the interval $[i\setup{B},(i+4)\setup{B}]$.

Analogous statements hold for jobs from $J_{3,2}$.
\end{lemma}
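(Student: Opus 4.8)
The plan is to obtain both statements directly from Lemma~\ref{le:intervals}, applied to the batch $J'$ at hand; recall that after the global reduction every job of $J$, hence of $J'$, is released within a single $B$-interval. First I would check that an optimal \emph{tentative} schedule for $J'$ may itself be taken to satisfy Properties~1--3 of Lemma~\ref{le:intervals}. Passing to the tentative model only relaxes constraints: requiring earliness $\setup{B}$ amounts to replacing each deadline $d_j$ by $d_j-\setup{B}$, and turning setup times into setup costs merely enlarges the processing window of every machine while leaving the (setup cost plus rental cost) accounting intact. Hence each modification used in the proof of Lemma~\ref{le:intervals} --- moving a job $j$ with $\size{j}{\tau}\geq\setup{\tau}$ onto an exclusive type-$\tau$ machine, slicing a machine along the $\tau$-interval grid and padding its open interval to exactly five $\tau$-intervals, and redistributing the ``overflow'' jobs onto fresh grid-aligned machines --- stays feasible and inflates the cost by only a constant factor.

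Next I would specialize Properties~1--3 to the size class $J_{3,1}=\{j:\size{j}{A}\geq\setup{A}\wedge\size{j}{B}<\setup{B}\}$. Since $\size{j}{A}\geq\setup{A}$ for every $j\in J_{3,1}$, on any type-$A$ machine $j$ meets the hypothesis of Property~1 and is therefore placed on an exclusive machine. Since $\size{j}{B}<\setup{B}$, whenever $j$ runs on a type-$B$ machine it is one of the ``remaining'' jobs of Lemma~\ref{le:intervals}: by Property~2 that machine is open for exactly five $B$-intervals, say $[(i-1)\setup{B},(i+4)\setup{B})$, and by Property~3 it processes only jobs released in the single $B$-interval $[i\setup{B},(i+1)\setup{B})$. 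As every job on that machine --- $j$ in particular --- is released no earlier than $i\setup{B}$ while the machine closes at $(i+4)\setup{B}$, the processing interval $I_j$ is contained in $[i\setup{B},(i+4)\setup{B}]$, which is the claim. For $J_{3,2}=\{j:\size{j}{A}<\setup{A}\wedge\size{j}{B}\geq\setup{B}\}$ I would run the mirror-image argument: $\size{j}{B}\geq\setup{B}$ forces an exclusive type-$B$ machine via Property~1, while $\size{j}{A}<\setup{A}$ makes $j$ ``remaining'' on type $A$, so Properties~2 and~3 applied with $\tau=A$ give that every type-$A$ machine is open for exactly five $A$-intervals and that a job released in $[i\setup{A},(i+1)\setup{A})$ is processed within $[i\setup{A},(i+4)\setup{A}]$.

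The step I expect to be the main obstacle is the verification, inside the first paragraph, that the tentative relaxation does not spoil any cost estimate of Lemma~\ref{le:intervals} --- concretely the $O(\OPT)$ bound on the cost of the redistributed jobs. Lemma~\ref{le:intervals} derives that bound from the global assumption $\OPT=\Omega(c\cdot r_{max})$, which is not directly available for a single batch; instead I would exploit that $J'$ occupies only one $B$-interval, so that on type $B$ the redistribution touches a single interval index whose cost is dominated by the cost of any type-$B$ machine used by \OPT, together with the fact that after subtracting the earliness $\setup{B}$ each job still retains slack at least $\varepsilon\setup{B}$. The specialization in the second paragraph is routine definition-chasing, and the proof of Lemma~\ref{le:j2mm} is identical, using that jobs in $J_2$ are ``remaining'' on both machine types.
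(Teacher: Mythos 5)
Your proposal is correct and follows essentially the same route as the paper, which gives no separate argument for Lemma~\ref{le:j3mm} but states that it follows directly from Lemma~\ref{le:intervals}; your specialization of Properties~1--3 to $J_{3,1}$ (exclusive type-$A$ machines since $\size{j}{A}\geq\setup{A}$, five $B$-intervals and single-interval release sets since $\size{j}{B}<\setup{B}$) and the mirror argument for $J_{3,2}$ is exactly that reduction. The extra care you take about transferring the cost bounds to the tentative (setup-cost, earliness-$\setup{B}$) model is a reasonable filling-in of details the paper leaves implicit, not a different approach.
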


By Lemma~\ref{le:j2mm} and Lemma~\ref{le:j3mm}, we can now formulate the problem (which is NP-hard by the NP-hardness of classical \textsc{BinPacking})  as an ILP (cf.\ Fig.~\ref{fig:ilp1}) and compute $O(1)$-approximate tentative schedules.
We use $\mathcal{I}_\tau(j)$ to denote all possible processing intervals of job $j$ on a machine of type $\tau \in \{A,B\}$.
Intuitively, $\mathcal{I}_A(j) \cup \mathcal{I}_B(j)$ describes all possible ways how job $j$ can be scheduled.
Note that $\mathcal{I}_A(j)$ and $\mathcal{I}_B(j)$ are built under the assumptions from Lemma~\ref{le:intervals}, Lemma~\ref{le:j2mm} and Lemma~\ref{le:j3mm}.
\begin{figure}
	\rule{\textwidth}{0.4pt}
\begin{align}
\text{min} \quad 5 \setup{A} \sum_i z_i + 5 c\setup{B} \cdot z_B +\sum_{\substack{x(I,j):  I \in \mathcal{I}_A(j) \\j \in J_1 \cup J_{3,1}}} 
 \hspace{-.45cm} & x(I,j)\size{j}{A}  + c 
\sum_{\substack{x(I,j): I \in \mathcal{I}_B(j) \\j \in J_1 \cup J_{3,2}}} x(I,j)\size{j}{B} \hspace{-1.5cm} \qquad \notag\\
\text{s.t.\ }
	\sum_{\substack{j \in J_2 \cup J_{3,2}: \\ r_j \in [(i-1)\setup{A},i\setup{A})}} \sum_{I \in \mathcal{I}_A(j): t \in I} x(I,j) &\leq z_i 
	\qquad\qquad\qquad\quad\ \forall t \in L_i, \forall i 
	\\
	\sum_{j \in J_2 \cup J_{3,1}} \sum_{I \in \mathcal{I}_B(j): t \in I} x(I,j) &\leq z_B \qquad\qquad\qquad\qquad \forall t \in L_B\\
	 \sum_{I \in \mathcal{I}_A(j) \cup \mathcal{I}_B(j)} x(I,j) &= 1 \qquad\qquad\qquad\qquad\quad\ \forall j \in J\\
	x(I,j) &\in \{0,1\}  \qquad\qquad\qquad \forall j \in J, \forall I 
	\end{align}
\rule{\textwidth}{0.4pt}
\caption{Integer Linear Program for variant with setup cost.}
\label{fig:ilp1}
\end{figure}
For each $I \in \mathcal{I}_A(j) \cup \mathcal{I}_B(j)$, the indicator variable $x(I,j)$ states if job $j$ is processed in interval $I$.
We use $L_B$ to denote all left endpoints of intervals in $\bigcup_{j \in J_2 \cup J_{3,1}} \mathcal{I}_B(j)$ and $L_i$ to denote all left endpoints of intervals in $\bigcup_{j \in J_2 \cup J_{3,2}: r_j\in[(i-1)\setup{A}, i\setup{A})} \mathcal{I}_A(j)$.
Additionally, we use a variable $z_B$ to denote the number of (non-exclusive) machines of type B that we rent. 
The variable $z_i$ describes the number of machines of type A that we rent and that process jobs released during the 
$(i-1)$-th A-interval.
For simplicity we assume that $\setup{B}$ is an integer multiple of $\setup{A}$. 

We are now asked to minimize the cost for machines of type B plus those for machines of type A, taking into account that each machine of type $\tau$ is either open for a duration of exactly $5s_\tau$ time units (first two summands of the objective function) or is an exclusive machine (last two summands).
The constraints of type~(1) and type~(2) ensure that, at any point in time, the number of jobs processed on (non-exclusive) machines does not exceed the number of open machines.
Additionally, constraints of type~(3) and type~(4) ensure that each job is completely processed by exactly one machine in a contiguous interval.

Observe that in general this ILP has an infinite number of variables since each $\mathcal{I}_\tau(j)$ contains all possible processing intervals of $j$ on a machine of type $\tau$.
Yet, there is an efficient way to reduce the number of variables that need to be considered to $O(|J|^2)$ such that afterward a solution only being by a constant factor larger than the optimal one of the original formulation exists.
\begin{lemma}
\label{le:polyintervals}
By losing a constant factor, we may assume $|\bigcup_{j \in J} (\mathcal{I}_A(j) \cup \mathcal{I}_B(j))| = O(|J|^2)$.
\end{lemma}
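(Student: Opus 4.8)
The plan is to shrink each set $\mathcal{I}_\tau(j)$ to $O(|J|)$ \emph{canonical} processing intervals and then argue that an optimal solution of the full (infinite-variable) ILP of Fig.~\ref{fig:ilp1} can be reshaped to use only canonical intervals at the expense of a constant factor in the cost. I would first dispose of exclusive-machine placements: by Lemma~\ref{le:intervals} a job $j$ sitting on an exclusive machine of type $\tau$ always incurs the fixed cost $c_\tau(\setup{\tau}+p_{j,\tau})$ and such a machine does not interact with any other job, so for each of the at most two types on which $j$ may be scheduled exclusively it suffices to keep a single representative interval (say the leftmost feasible one). This is lossless and contributes only $O(|J|)$ intervals in total.

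For the non-exclusive placements, Lemmas~\ref{le:intervals}, \ref{le:j2mm} and \ref{le:j3mm} already fix the open window of the machine: if $j$ is released in the $k$-th $\tau$-interval, then $I_j \subseteq [k\setup{\tau},(k+4)\setup{\tau}] \cap [r_j, d_j-\setup{B}]$, a window of length at most $4\setup{\tau}$. The only remaining freedom is the start of $I_j$ inside that window. I would lay a uniform grid of spacing $g=\Theta(\setup{\tau}/|J|)$ across it, round all release times down to grid points (which only relaxes the instance), and let $\mathcal{I}_\tau(j)$ consist of the canonical exclusive interval(s) together with all grid-aligned candidate intervals. There are $O(|J|)$ of the latter per job per type \emph{independently of} $\setup{\tau}$, since the window has length $O(\setup{\tau})$ and the spacing is $\setup{\tau}$ divided by $\Theta(|J|)$; hence $O(|J|^2)$ intervals overall, matching the claim.

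It then remains to establish the constant-factor guarantee. Starting from an optimal solution of the unrestricted ILP, I would decompose it per machine type and per $\tau$-interval as in Lemma~\ref{le:intervals}, left-shift the jobs on each non-exclusive machine so that each one starts either at its release time or flush against its predecessor, and finally snap each start time to the grid. A short interchange argument shows that after snapping only a constant number of a machine's jobs can be simultaneously active, so the jobs originally carried by one machine can be redistributed over $O(1)$ grid-aligned machines; since $\sum_{j\in J_M}p_{j,\tau}\le 4\setup{\tau}$ and release times were only decreased, all deadlines and windows are respected, and summing over all machines multiplies the cost by $O(1)$. \textbf{The main obstacle} is precisely this snapping step: a job may be far shorter than $g$, so a grid coarse enough to leave only $O(|J|)$ candidates can a priori stack many short jobs and blow up the machine count, and it can push a job past its deadline. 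Controlling both simultaneously — snapping release times down, using the reserved $\setup{B}$-earliness as a buffer where it is available and otherwise snapping strictly leftward, and bounding the blow-up through the budget $\sum_{j\in J_M}p_{j,\tau}\le 4\setup{\tau}$ together with $p_{j,\tau}\ge 1$ (adapting the grid width to whether $\setup{\tau}$ is large or small relative to $|J|$) — is where the bulk of the work lies.
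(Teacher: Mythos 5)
There is a genuine gap here, and you have in fact pointed at it yourself: the whole difficulty of the lemma sits in the ``snapping'' step, and your proposal does not resolve it. A uniform grid of spacing $g=\Theta(\setup{\tau}/|J|)$ cannot give a constant-factor loss in general. The residual slack left after the earliness requirement can be as small as $\varepsilon\setup{B}\geq 1$, and job sizes are only bounded below by $1$; both can be arbitrarily smaller than $g$ (take $|J|=O(1)$ and $\setup{B}$ huge), so rounding a start time to the grid can push a job past its modified deadline, while rounding release times down / expanding candidate intervals outward to grid points makes every short job reserve an entire grid cell of length $\Theta(\setup{\tau}/|J|)$. Concretely, $|J|$ unit-size jobs that an optimal tentative schedule runs back-to-back on one machine would, after expansion, all overlap in the first cell and force $\Omega(|J|)$ parallel machines in the ILP constraints of type~(1)/(2) --- an $\Omega(|J|)$-factor loss, not $O(1)$. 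Your fallback of refining the grid to spacing $\approx 1$ when $\setup{\tau}\gg|J|$ repairs this stacking but destroys the counting: the admissible window has length $\Theta(\setup{\tau})$, so you get $\Theta(\setup{\tau})$ candidates per job, which is not $O(|J|)$ since $\setup{\tau}$ is unrelated to $|J|$. The volume budget $\sum_{j\in J_M}\size{j}{\tau}\leq 4\setup{\tau}$ together with $\size{j}{\tau}\geq 1$ does not rescue the argument, because the blow-up above is caused by over-reservation of time, not by total processing volume; no oblivious, uniform grid can simultaneously deliver $O(|J|)$ candidates per job and an $O(1)$ loss.

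The missing idea --- and the route the paper takes, extending the argument of Chuzhoy et al.~\cite{chuzhoy04} --- is to make the discretization \emph{instance-adaptive}. For the single type-B machine and for each type-A machine associated with an A-interval, one greedily schedules on that one machine the available jobs in earliest-finishing order; the set $P_i$ consisting of the completion times of this greedy schedule together with all release times and deadlines of the remaining jobs has size $O(|J|)$, and left (resp.\ right) endpoints of job intervals are rounded down (resp.\ up) to the nearest point of the appropriate $P_i$. The crucial property, which no fixed grid has, is that every processing interval of a job \emph{not} taken by the greedy machine must contain a point of $P_i$ (otherwise the greedy machine would have been idle and able to run that job), and therefore the outward rounding at most doubles the number of machines needed at any time, while the greedily scheduled sets $D_i$ are absorbed by one additional machine each (and are empty when the optimum used no machine of the corresponding kind). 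This is what yields $O(|J|^2)$ intervals overall at a constant-factor cost increase. Your handling of exclusive machines and of the job classes for which Lemmas~\ref{le:intervals}--\ref{le:j3mm} already leave only $O(1)$ interval choices is fine and agrees with the paper; it is the reduction for $J_2$ (and analogously $J_{3,1}$, $J_{3,2}$) where your argument, as you concede, stops short of a proof.
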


\begin{proof}
The following proof is an extension of one from \cite{chuzhoy04}, which studies a related issue for the problem of machine minimization.

We show how to reduce the number $|\bigcup_{j \in J} (\mathcal{I}_A(j) \cup \mathcal{I}_B(j)|$ of \emph{job intervals} to $O(|J|^2)$.
Note that due to Lemma~\ref{le:intervals}, Lemma~\ref{le:j2mm} and Lemma~\ref{le:j3mm}, it holds 
\begin{align*}
\left| \bigcup_{\tau \in \{A,B\}} \bigcup_{j \in J_1} \mathcal{I}_\tau(j)\right| = O(1),
\left| \bigcup_{j \in J_{3,1}} \mathcal{I}_A(j)\right| = O(1) \text{ and } \left| \bigcup_{j \in J_{3,2}} \mathcal{I}_B(j)\right| = O(1) \enspace .
\end{align*}
For the sake of simplicity, we only argue that $|\bigcup_{j \in J_2} (\mathcal{I}_A(j) \cup \mathcal{I}_B(j))| = O(|J|^2)$ in the following.
The reasoning for the remaining sets $J_{3,1}$ and $J_{3,2}$ is analogous but even simpler as it is a more restricted case than the one for $J_2$ due to the aforementioned bounds on the number of job intervals on one of the two machine types.

We now describe the construction and afterward prove the claimed bound on the possible loss.
Recall Lemma~\ref{le:intervals} and Lemma~\ref{le:j2mm}.
Let $M_0$ be a machine of type B such that it is open until $4\setup{B}$ (which corresponds to machines represented by $z_B$ in Fig.~\ref{fig:ilp1}) and for each $i \in \{1, \ldots, \frac{\setup{B}}{\setup{A}}\}$ let $M_i$ be a machine of type A which can process jobs during the interval $[(i-1)\setup{A}, (i+3)\setup{B})$ (which corresponds to machines represented by $z_i$ in Fig.~\ref{fig:ilp1}).
We first construct sets $D_i \subseteq J_2$, $i \in \{0, 1, \ldots, \frac{\setup{B}}{\setup{A}} \}$, such that $D_i$ can be scheduled on machine $M_i$.
To do so, for each machine $M_i$, we separately consider the set $J_2$ of all jobs.
At each time machine $M_i$ gets idle, we greedily assign the job that will finish earliest among all jobs that are not yet scheduled on $M_i$ and that can still meet their deadlines.
Let $P_i$ be the set which contains all right endpoints of processing intervals of jobs scheduled on $M_i$ together with all release times and deadlines of the remaining jobs. 
Note that $|P_0| = O(|J|)$ and $|P_i| = O(|J|)$, for all $i \in \{1, \ldots, \frac{\setup{B}}{\setup{A}} \}$.
Now we reduce the number of job intervals of any job $j$ as follows and afterward bound the increase in the cost.
Each left endpoint of a job interval in $\mathcal{I}_B(j)$ is rounded down to its nearest point in $P_0$ and each right endpoint is rounded up to its nearest point in $P_0$. 
Similarly, each left endpoint of a job interval in $\mathcal{I}_A(j)$ is rounded down and each right endpoint is rounded up to its nearest point in its corresponding $P_i$.
By this construction, the number of job intervals becomes $O(|J|^2)$.

Now, consider an optimal solution $S$ for the original instance.
If a job $j \in D_0$ is scheduled on a machine of type A in $S$, remove it from $D_0$ and similarly,
if a job $j \in D_i$, $i \in \{1, \ldots,\frac{\setup{B}}{\setup{A}} \}$, is scheduled on a machine of type B in $S$, remove it from $D_i$.
Also, for each job $j \in D_i$, remove this job from $S$.
Now, all processing intervals of jobs $j\in S$ must contain at least one point from the respective set $D_i$ by construction.
Thus, after performing the modifications of the release times and deadlines, the jobs $j\in S$ can always be scheduled by twice the number of type A and type B machines used in $S$.
By definition, the remaining jobs in the sets $D_i$ can be assigned to one machine for each set.
Note that due to the removal of jobs from $D_0$, this set is empty if the original solution $S$ had no machine of type B.
Similarly, for any fixed $i \in \{1, 2, \ldots, \frac{\setup{B}}{\setup{A}}\}$, $D_i$ is empty if the original solution $S$ had no machine of type A processing jobs released during the interval $[(i-1)\setup{A}, i\setup{A})$.
Hence, there is a schedule for the modified instance with at most three times the cost of the original schedule. 
\end{proof}

\subsection{The \ALGO Algorithm}
\label{sec:bars}
In this section, we describe and analyze our competitive algorithm, which is essentially based on several observations concerning how to restrict and then turn tentative schedules, as described in the previous section, into feasible solutions.
We first show the following lemma, which relates the cost of an optimal schedule to the cost of one where jobs are finished early.
For simplicity we use $\Delta \coloneqq \frac{1}{2} \varepsilon \setup{B}$.

\begin{lemma}
\label{le:premature}
There is a schedule with all jobs having earliness at least $\setup{B}$, cost $O((\nicefrac{c}{\varepsilon} + \nicefrac{1}{\varepsilon^2})\OPT)$ and no machine processes any two jobs $j, j'$ with $r_j \in [i \Delta, (i+1) \Delta)$ and $r_{j'} \in [i'\Delta, (i'+1)\Delta)$ with $i \neq i'$ and $i,i' \in \mathbb{N}_0$.

Furthermore, the cost for jobs $j \in J' \coloneqq \{ j : \exists \tau \in \{A,B\}~r_j+\size{j}{\tau} > d_j -\setup{B}\}$ is $O((\nicefrac{c}{\varepsilon} + \nicefrac{1}{\varepsilon^2})\OPT)$.
The cost for jobs $j \in J \setminus J'$ is $O((\nicefrac{1}{\varepsilon^2})\OPT)$.
\end{lemma}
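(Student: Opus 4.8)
The plan is to transform \OPT into a schedule with the three claimed properties (all jobs early by at least $\setup{B}$; each machine only serves jobs whose release times fall in a common $\Delta$-slot; and the cost bounds on $J'$ and $J\setminus J'$) by a sequence of local modifications, each of which costs at most a bounded factor. First I would invoke Lemma~\ref{le:intervals} to assume \OPT already has the nice structure there: long jobs ($p_{j,\tau}\geq s_\tau$) sit on exclusive machines, and each remaining machine is open for exactly five $\tau$-intervals and only processes jobs released during a single $\setup{B}$-interval (for type B) resp.\ a single $s_A$-interval. So it suffices to handle one subinstance, say jobs released in $[0,\setup{B})$, and to show how each machine there can be ``split'' into machines obeying the $\Delta$-slot condition and then shifted so that all jobs finish $\setup{B}$ early.

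For the $\Delta$-slot property: fix a machine $M$ of type $\tau$ that is open for five $\tau$-intervals and processes jobs released in $[0,\setup{B})$. Partition $J_M$ according to which slot $[i\Delta,(i+1)\Delta)$ the release time falls into; since $\Delta=\frac12\varepsilon\setup{B}$ there are $O(1/\varepsilon)$ such slots intersecting $[0,\setup{B})$. Replace $M$ by one machine per nonempty slot, each open for the same window as $M$ (or slightly extended, still $O(s_\tau)$ length), processing exactly the jobs from that slot in the order \OPT used. This multiplies the cost of $M$ by at most $O(1/\varepsilon)$ — this is where the $1/\varepsilon$ (and, combined with the type-B price, the $c/\varepsilon$) in the bound comes from — and does not hurt feasibility since we only removed jobs from each machine. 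The harder accounting is to argue this $O(1/\varepsilon)$ blow-up is affordable: for machines that are ``full'' (total processing $\Omega(s_\tau)$), the original cost was within a constant of the processing cost, so even an $O(1/\varepsilon)$ factor is charged against $\OPT$ scaled by $O(1/\varepsilon)$; for machines carrying only a tiny workload, one instead charges against the $\Omega(c\cdot r_{max})$ assumption on \OPT, exactly as in the last paragraph of the proof of Lemma~\ref{le:intervals}. I expect this bookkeeping — making sure the splitting cost is bounded by $O((c/\varepsilon+1/\varepsilon^2)\OPT)$ and not worse — to be the main obstacle, together with getting the right exponent of $\varepsilon$.

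For the earliness property: a job $j$ with $r_j\in[i\setup{B},(i+1)\setup{B})$ has slack $\sigma_{j,\tau}\geq\beta=(1+\varepsilon)\setup{B}$ on its cheaper type, so $d_j\geq r_j+p_{j,\tau}+(1+\varepsilon)\setup{B}$. A machine obeying the $\Delta$-slot and five-interval structure can be opened early enough that every job it carries is started close to the left end of its window; pushing the whole machine's window left by $O(\setup{B})$ and using that the total load is $\le O(\setup{B})$ (Property~2) keeps all jobs inside their windows while buying $\setup{B}$ of earliness — at the cost of another $O(1)$ (or $O(1/\varepsilon)$) factor on the machine's length. Finally, for the split between $J'$ and $J\setminus J'$: jobs in $J\setminus J'$ satisfy $r_j+p_{j,\tau}\le d_j-\setup{B}$ on \emph{both} types up to the cheaper one, i.e.\ they already have generous room, and one checks that the modifications above cost them only an $O(1/\varepsilon^2)$ factor (no $c/\varepsilon$ term, because the extra $c/\varepsilon$ was only needed to accommodate the tight jobs of $J'$ that are forced onto type-B machines or onto near-exclusive configurations). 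The jobs of $J'$ absorb the full $O(c/\varepsilon+1/\varepsilon^2)$ bound. I would conclude by summing the per-machine cost increases over the $O(1)$ classes of machines ($J_1$, $J_2$, $J_{3,1}$, $J_{3,2}$, exclusive) and the $O(1/\varepsilon)$ slots, obtaining the stated totals.
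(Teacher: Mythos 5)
There is a genuine gap, and it sits exactly at the heart of the lemma: the treatment of the tight jobs $J'$. Your earliness argument is ``push the whole machine's window left by $O(\setup{B})$'', but a job cannot be started before its release time, and for a job $j\in J'$ that \OPT processes on a type $\tau$ with $r_j+\size{j}{\tau} > d_j-\setup{B}$, even starting it at $r_j$ leaves earliness strictly less than $\setup{B}$ --- no shifting of the machine can fix this. The paper's proof handles exactly these jobs by \emph{switching the machine type}: since the minimum-slack assumption guarantees $\max_\tau \slack{j}{\tau}\geq (1+\varepsilon)\setup{B}$, a job that is tight on type $\tau$ has slack at least $\beta$ on the other type $\tau'$, which forces $\size{j}{\tau}\geq \size{j}{\tau'}+\varepsilon\setup{B}$; moving $j$ to an exclusive machine of type $\tau'$ then costs at most an $O(\nicefrac{c}{\varepsilon})$ factor against the processing cost $j$ already pays in \OPT, and gives earliness $\geq\setup{B}$ for free. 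This is the sole source of the $\nicefrac{c}{\varepsilon}$ term and of the $J'$ versus $J\setminus J'$ split in the statement; your proposal gestures at it (``tight jobs forced onto type-B machines'') but never supplies the mechanism or the cost accounting, and also misstates the slack hypothesis as holding ``on the cheaper type'' --- it holds on \emph{some} type, not necessarily the cheaper one and not necessarily the one \OPT uses.

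The argument for $J\setminus J'$ is also looser than what is needed. Splitting each machine by $\Delta$-slots of release times costs $O(\nicefrac{1}{\varepsilon})$, as you say, but ``total load $O(\setup{B})$ keeps all jobs inside their windows while buying $\setup{B}$ of earliness'' does not follow: with load up to $\Theta(\setup{B})$ on one machine, shifting everything left to the release time of the first job does not guarantee the later jobs finish $\setup{B}$ early. The paper pays a \emph{second} $O(\nicefrac{1}{\varepsilon})$ factor (hence the $\nicefrac{1}{\varepsilon^2}$) to split each machine further by the $\varepsilon\setup{B}$-interval in which jobs finish, so that each resulting machine carries workload at most $\varepsilon\setup{B}$; only then does a case analysis on the finishing interval (shift by $t_{k_1}-(i+1)\Delta$, or reorder by release time and shift so the first job starts at its release) yield earliness $\setup{B}$ without violating release times. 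Your write-up neither introduces this second split nor explains where the exponent $2$ on $\nicefrac{1}{\varepsilon}$ comes from, so as it stands the earliness claim for shared machines is unproven.
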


\begin{proof}
Given an optimal schedule $\OPT$ for $J$, we show how to modify it such that the desired properties and bound on the cost hold.

First, let $E_\tau \subseteq J'$ be the set of jobs $j \in J'$ that are processed on a machine of type $\tau$ in \OPT and for which it holds $r_j+\size{j}{\tau} > d_j -\setup{B}$.
By increasing the cost by a factor of $O(\nicefrac{c}{\varepsilon})$ we can assume, for all $\tau \in \{A,B\}$, that each job $j \in E_\tau$ is processed on an exclusive machine of type $\tau' \neq \tau$.
This is true since $\slack{j}{\tau'} \geq \beta$ and hence $\size{j}{\tau} \geq  \size{j}{\tau'} + \varepsilon \setup{B}$.
Consequently, all jobs from $E_A \cup E_B$ fulfill the desired properties.

Therefore, consider the remaining set of jobs $J \setminus (E_A \cup E_B)$.
By assuming that each machine only processes jobs released during a common interval $[i\Delta, (i+1)\Delta)$, $i \in \mathbb{N}_0$, the cost are increased by a factor of $O(\nicefrac{1}{\varepsilon})$. 
Consider any such machine $M^i$.
Let $J_k^{M^i}$ be the set of jobs finished in the interval $[k\varepsilon \setup{B}, (k+1)\varepsilon \setup{B})$ on machine $M^i$, $k \in \mathbb{N}_0$.
By losing a factor of $O(\nicefrac{1}{\varepsilon})$ we can replace $M^i$ by machines $M^i_k$ such that $M^i_k$ processes exactly those jobs from $J_k^{M^i}$ that are started and finished in $[k\varepsilon \setup{B}, (k+1)\varepsilon \setup{B})$.
Now we can manipulate the starting times of jobs such that we obtain a schedule as desired in the lemma.
To this end, we perform the following steps:
Let ${k_1}, {k_2}, \ldots$ be the jobs processed on machine $M^i_k$ such that for the starting times $t_j$ it holds $t_{k_1} < t_{k_2} < \ldots$
In case $k\varepsilon\setup{B} \geq \setup{B} + (i+1)\Delta$, 
let $d = t_{k_1} - (i+1)\Delta$ and reset the starting time $t_j$ of job $j$ on $M^i_k$ to $t_j = t_j - d$.
Observe that $t_j \geq r_j$ for all $j \in \{k_1,k_2,\ldots\}$ and each job $j$ is finished by $t_j-d+p_j \leq d_j - d \leq d_j - k\varepsilon \setup{B} + (i+1)\Delta \leq d_j -\setup{B}$.
In case $(i+1)\Delta \leq k\varepsilon \setup{B} < \setup{B} + (i+1)\Delta$, we assume $r_{k_1} \leq r_{k_2} \leq \ldots$, which is without loss of generality.
Let $d = t_{k_1} - r_{k_1}$.
We reset the starting time $t_j$ of job $j$ on $M^i_k$ to $t_j = t_j - d$.
Observe that $t_j \geq r_j$ for all $j \in \{k_1,k_2,\ldots\}$ and each job is finished at least $\setup{B}$ before its deadline (due to the ordering of jobs and the fact that the workload of all of them is not larger than $\varepsilon \setup{B}$).
In case $k\varepsilon \setup{B} < (i+1)\Delta$, it follows $k\varepsilon \setup{B} \leq i\Delta$ and we do not need to do anything.
Hence, we have established a schedule as desired by the lemma.
 \end{proof}

We are now ready to describe our approach.
The formal description is given in Figure~\ref{fig:algorithm}.
It relies on $\Delta$ and its relation to the slack of a job (after decreasing the $d_j$ in Step~\ref{step:deadlines}) and uses the following partition of $J$ into the sets: 
\begin{align*}
&J_1 = \{j : \slack{j}{B} \geq 2\Delta \wedge (\slack{j}{A} \geq 2\Delta \vee \size{j}{A} \leq \Delta)\},\\
 &J_2 = \{j  : \slack{j}{B} \geq 2\Delta \wedge (\slack{j}{A} < 2\Delta \wedge \size{j}{A} > \Delta) \},\\
&J_3 = \{j  : \slack{j}{B} < 2\Delta  \wedge \slack{j}{A} \geq 2\Delta\}.
\end{align*}
Before proving the correctness and bounds on the cost, we shortly describe the high level ideas skipping technical details, which will become clear during the analysis.
The algorithm proceeds in phases, where each phase is devoted to scheduling jobs released during an interval of length $\Delta$.
In a given phase, we first decrease the deadlines of jobs by $\setup{B}$ (cf.\ Step~\ref{step:1}) to ensure that later on (cf.\ Step~\ref{step:realize}) tentative schedules meeting these modified deadlines can be extended by the necessary setups without violating any original deadline.
We also precautionary open machines for jobs that are required to be started early.
Then at the end of the phase (cf.\ Step~\ref{step:2}), we compute tentative schedules with additional restrictions on starting times and machines to use (cf.\ Step~\ref{step:restrictions}), 
which are carefully defined depending on the characteristics of jobs concerning their slacks.
This approach ensures that we can turn solutions into feasible schedules while guaranteeing that costs are not increased too much.
The feasibility is crucial since tentative schedules are not only delayed due to the added setups but also because of computing the schedules only at the end of a phase.

In the next lemmas, the analysis of the algorithm is carried out, leading to our main result in Theorem~\ref{th:theorem}.
Recall that due to Lemma~\ref{le:premature} it is sufficient to show that the schedules for a single phase are feasible and the costs are increased by a factor of $O(1)$ and $O(\nicefrac{1}{\varepsilon})$, respectively, in comparison to the tentative schedules if they were computed without the additional restrictions of Step~\ref{step:restrictions}.
\begin{figure}[h!]
\rule{\textwidth}{0.4pt}
\textsc{BatchedDispatch}($\varepsilon$)\\
Let $C = (\setup{B} \geq \setup{A} + \Delta \wedge \frac{\setup{A}}{c} > \Delta)$.\\
In phase $i\in \mathbb{N}$ process all jobs $j$ with $r_j\in [(i-1)\Delta, i\Delta)$:
  \begin{enumerate}
    \item Upon arrival of a job $j$ at time $t$ \label{step:1}
    \begin{enumerate}[label={\arabic{enumi}.\arabic*}] 
      \item Set $d_j = d_j - \setup{B}$. \label{step:deadlines}
      \item Classify $j$ to belong to the set $J_1, J_2$ or $J_3$.\\
      If C holds, also define $J_{2,1} \coloneqq J_2$ and $J_{2,2} \coloneqq \emptyset$.\\
      Otherwise, $J_{2,1} \coloneqq \{j \in J_2 : c\size{j}{B} > \setup{A}\}$ and $J_{2,2} \coloneqq J_2 \setminus J_{2,1}$.
      \item If $C$ does not hold, open a machine of type A at $t$ for each $j \in J_{2,1}$. \label{openstep}
    \end{enumerate}
    \item At time $i \Delta$ \label{step:2}
    \begin{enumerate}[label={\arabic{enumi}.\arabic*}] 
      \item Compute tentative schedules for the job sets with these additional restrictions:\label{step:restrictions} 
      \begin{itemize}
	\item Starting times of jobs from $J_1, J_{2,2}$ and $J_3$ are at least $i\Delta$,
	\item starting times of jobs from $J_{2,1}$ on machines of type B are at least $i\Delta$ and
	\item jobs from $J_{2,2}$ ($J_3$) only use machines of type B (type A).
      \end{itemize}
          \item Realize the tentative schedules by increasing the starting times of jobs from
          \begin{itemize}
	    \item $J_1$ by $s_\tau$ on machines of type $\tau$,
	    \item $J_{2,1}$ by $\setup{B}$ on machines of type B and by $\setup{A}+\Delta$ or $\setup{A}$ depending on whether $C$ holds or not on machines of type A,
	    \item $J_{2,2}$ by $\setup{B}$ and
	    \item $J_3$ by $\setup{A}$,
          \end{itemize}
          and doing the respective setups at time $i\Delta$.
	Use machines from Step~\ref{openstep} for $J_{2,1}$ if necessary and otherwise close them after finishing the setups. \label{step:realize}
    \end{enumerate}
  \end{enumerate}
  \vspace{-.4cm}
\rule{\textwidth}{0.4pt}
\caption{Description of \textsc{BatchedDispatch}($\varepsilon$) algorithm for \PROBLEM.}
\label{fig:algorithm}
\end{figure}

\begin{lemma}
\label{le:j1}
For jobs from $J_1$, \ALGO produces a feasible schedule with rental cost $O(\nicefrac{c}{\varepsilon} + \nicefrac{1}{\varepsilon^2})\OPT$.
\end{lemma}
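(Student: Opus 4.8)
The plan is to bound the cost of \ALGO on $J_1$ against the benchmark schedule $S$ supplied by Lemma~\ref{le:premature} and to read off feasibility directly from the construction in Fig.~\ref{fig:algorithm}. Two reductions organize the argument. First, in all estimates $d_j$ denotes the value \emph{after} the decrease in Step~\ref{step:deadlines}, so that "a schedule meets $d_j$" is the same as "every job finishes $\setup{B}$ before its original deadline". Second, by Lemma~\ref{le:premature} it suffices to treat each phase $i$ separately: phase $i$ handles exactly the jobs with $r_j\in[(i-1)\Delta,i\Delta)$, and $S$ never puts jobs from two different $\Delta$-windows on a common machine, so $\mathrm{cost}(S)=\sum_i\mathrm{cost}(S^{(i)})$, where $S^{(i)}$ is $S$ restricted to phase-$i$ jobs; restricting $S^{(i)}$ further to $J_1$ only lowers the cost.

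\textbf{Feasibility.} Fix a phase $i$. A feasible tentative schedule for the $J_1$-jobs of phase $i$ with all starting times $\geq i\Delta$ and meeting the decreased deadlines exists (e.g.\ put each such job on its own type-$B$ machine: for $j\in J_1$ we have $\slack{j}{B}\geq 2\Delta$ and $r_j\geq(i-1)\Delta$, so $d_j\geq i\Delta+\size{j}{B}$), so Step~\ref{step:restrictions} returns such a schedule. Realizing it in Step~\ref{step:realize} uniformly shifts every starting time on a type-$\tau$ machine by $\setup{\tau}$ and runs the setup during $[i\Delta,i\Delta+\setup{\tau})$; since the earliest tentative start on each machine is at least $i\Delta$, the setup precedes all processing, the machine is available whenever it processes, and a uniform shift preserves disjointness of processing intervals. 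Each job now starts after $i\Delta>r_j$, and a job finishing at $f\leq d_j$ in the tentative schedule finishes at $f+\setup{\tau}\leq d_j+\setup{B}$, i.e.\ by its original deadline. Machines opened in distinct phases are distinct, so the union over phases is feasible.

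\textbf{Cost.} We chain three estimates. (i) Realizing a phase-$i$ tentative schedule adds at most one actual setup of cost $c_\tau\setup{\tau}$ per machine; since every machine already carries a setup cost in the setup-cost model underlying Fig.~\ref{fig:ilp1}, this at most doubles its cost, so the realized cost is $O(1)$ times the tentative cost. (ii) By Lemma~\ref{le:polyintervals} and solving (the LP relaxation of) the ILP in Fig.~\ref{fig:ilp1} and rounding, the tentative schedule computed in Step~\ref{step:restrictions} is within a constant factor of the cheapest tentative schedule for the $J_1$-jobs of phase $i$ that obeys the start-time restriction of Step~\ref{step:restrictions}. (iii) That cheapest restricted tentative schedule costs $O(1)\cdot\mathrm{cost}(S^{(i)}$ restricted to $J_1)$. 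Summing (iii) over $i$ and using $\mathrm{cost}(S)=O((\nicefrac{c}{\varepsilon}+\nicefrac{1}{\varepsilon^2})\OPT)$ yields the bound.

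\textbf{The crux (step (iii)) and the main obstacle.} Here we must turn $S^{(i)}$ restricted to $J_1$ — which respects release times but may start a job anywhere in $[(i-1)\Delta,i\Delta)$ — into a schedule using the \emph{same} job-to-type assignment (to avoid a spurious factor $c$), the same cost up to a constant, with all starting times $\geq i\Delta$, still meeting the decreased deadlines. The lever is the definition of $J_1$: every $j\in J_1$ has $\slack{j}{B}\geq 2\Delta$, and also $\slack{j}{A}\geq\Delta$ — if $\size{j}{A}>\Delta$ this is the $J_1$-condition $\slack{j}{A}\geq 2\Delta$, and if $\size{j}{A}\leq\Delta$ then $\slack{j}{A}<\Delta$ would give $d_j<r_j+2\Delta$ while $\slack{j}{B}\geq 2\Delta$ and $\size{j}{B}\geq 1$ give $d_j\geq r_j+1+2\Delta$, a contradiction. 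Hence re-releasing every $J_1$-job at $i\Delta$ (a delay of at most $\Delta$) still leaves slack $\geq\Delta$ on type $B$ and slack $\geq 0$ on type $A$; splitting each machine of $S^{(i)}$ into a constant number of machines and re-scheduling each piece greedily from time $i\Delta$ in earliest-deadline order — in the spirit of the interval-alignment step of the proof of Lemma~\ref{le:intervals} — should produce the desired restricted schedule at $O(1)$ cost. I expect this repacking, and in particular showing that a constant number of pieces per original machine suffices despite losing up to $\Delta$ of slack, to be the main technical difficulty; it is precisely what makes the $J_1$-overhead $O(1)$ rather than $O(\nicefrac{1}{\varepsilon})$.
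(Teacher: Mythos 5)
Your feasibility argument and your overall decomposition (per-phase analysis, realized cost versus tentative cost, tentative-with-restriction versus the benchmark of Lemma~\ref{le:premature}) coincide with the paper's proof. However, your step (iii) --- the claim that forcing all starting times to be at least $i\Delta$ costs only a constant factor --- is exactly the heart of the lemma, and you leave it as a sketch (``should produce the desired restricted schedule'', ``I expect this repacking \ldots to be the main technical difficulty''). As it stands this is a genuine gap: nothing in your write-up establishes that the greedy EDF repacking after re-releasing jobs at $i\Delta$ uses only $O(1)$ extra machines per original machine, and the slack bound you derive (slack at least $\Delta$ on type A, hence ``$\geq 0$'' after the delay) is by itself too weak to make such a repacking argument go through without further structure.

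The missing idea in the paper is simpler than a repacking argument. Take the unrestricted tentative schedule $S^*$, fix a machine $M$ of type $\tau$, and order its jobs $j_1,j_2,\ldots$ by starting time; let $j_{k+1}$ be the unique job whose processing interval straddles $i\Delta$ (if any). The suffix $j_{k+2},j_{k+3},\ldots$ already starts at or after $i\Delta$ and is left untouched; the straddler $j_{k+1}$ is moved to an exclusive machine opened at $i\Delta$; and the prefix $j_1,\ldots,j_k$ is shifted \emph{as a block} by exactly $\Delta$ onto one new machine of type $\tau$ --- no repacking and no ``constant number of pieces'' analysis is needed, because all prefix jobs are released and finished inside $[(i-1)\Delta,i\Delta)$, so a uniform shift by $\Delta$ keeps them disjoint and starting at or after $i\Delta$. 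Feasibility of the shift and of the exclusive machine follows from the $J_1$ dichotomy you partially identified: for each job either $\slack{j}{\tau}\geq 2\Delta$ on its assigned type $\tau$, or $\size{j}{A}\leq\Delta$ together with $\slack{j}{B}\geq 2\Delta$, which gives $d_j\geq r_j+2\Delta\geq(i+1)\Delta\geq t^*_j+\size{j}{\tau}+\Delta$ (and $\geq i\Delta+\size{j}{A}$ for the straddler). Since the two added machines need not stay open longer than $M$, the cost at most triples, i.e., $S\leq 3S^*=O((\nicefrac{c}{\varepsilon}+\nicefrac{1}{\varepsilon^2}))\OPT$, which is the bound your chain of estimates needs. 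So your route could be completed, but the decisive argument must be supplied; without it the lemma is not proved.
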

\begin{proof}
Fix an arbitrary phase $i$ and define $N_i \coloneqq \{j \in J_1 : r_j \in [(i-1)\Delta, i \Delta)\}$ to be the set of jobs released during the $i$-th phase.
Let $S$ be the tentative schedule for $N_i$ and let $t_j$ denote the starting time of job $j \in N_i$ in $S$.
Let $S'$ and $t'_j$ be defined analogously for the respective schedule produced by \ALGO.

We first reason about the feasibility of $S'$.
First of all, the tentative schedule $S$ provides a solution with $t_j \geq i\Delta$ and $t_j + \size{j}{\tau} \leq d_j$ for all $j \in N_1$ processed on a machine of type $\tau$, i.e., no modified deadline (as given after Step~\ref{step:1} is executed) is violated and no job is started before time $i\Delta$.
Since the starting times are increased in $S'$ to $t'_j = t_j + \setup{\tau}$, each job $j$ processed on a machine of type $\tau$ is not started before $i\Delta + \setup{\tau}$ and is finished by $t'_j + \size{j}{\tau} \leq t_j + \setup{\tau} + \size{j}{\tau} \leq d_j + \setup{\tau}$.
Therefore, by starting the setups at $i\Delta$, $S'$ is a feasible solution.

Second, we have to prove the bound on the cost of $S'$.
Let $S^*$ be the tentative schedule for $N_i$ if computed without the restrictions on the starting times.
Since the cost of $S'$ are not larger than those of $S$, we need to show that the cost of $S$ only increase by a constant factor in comparison to $S^*$, which according to Lemma~\ref{le:premature} has cost $O((\nicefrac{c}{\varepsilon} + \nicefrac{1}{\varepsilon^2})\OPT)$.
Let $N^\tau_i \subseteq N_i$ be the set of jobs processed on machines of type $\tau$ in $S^*$, for $\tau \in \{A,B\}$.
Consider an arbitrary machine $M$ of type $\tau$ in $S^*$ and let $J_M = \{j_1, j_2, \ldots\}$ denote the jobs processed on $M$ such that $t^*_{j_1} < t^*_{j_2} < \ldots$ holds.
Let $k \in \mathbb{N}$ such that $t^*_{j_{k}}+p_{j_k,\tau}<i\Delta$ and $t^*_{j_{k+1}} +p_{j_{k+1},\tau}\geq i\Delta$.
Now, we can leave all jobs $j \in \{j_{k+2}, j_{k+3}, \ldots \}$ unaffected (i.e.\ $t_j = t^*_j$) and we can process $j_{k+1}$ on an exclusive machine started at
time $t_{j_{k+1}} =  t^*_{j_{k+1}} = i\Delta > r_{j_{k+1}} $.
This is feasible since $j_{k+1}$ is finished at time $i\Delta + p_{j_{k+1},\tau} \leq d_{j_{k+1}}$.
The last inequality either holds since $p_{j_{k+1}} \leq \Delta$ and $d_{j_{k+1}} \geq 2\Delta + r_{j_{k+1}}$ because $\slack{j_{k+1}}{B} \geq 2\Delta$, or since $\sigma_{j_{k+1}, \tau} = d_{j_{k+1}} - r_{j_{k+1}} - p_{j_{k+1},\tau} \geq 2 \Delta$.
All remaining jobs $j \in \{j_1, \ldots, j_k\}$ can be moved to a new machine $M'$ of type $\tau$ and 
$t_j = t^*_j + \Delta$.
Each job $j \in \{j_1, \ldots, j_k\}$ is then finished at time $t_j + p_{j,\tau} = t^*_j + p_{j,\tau} + \Delta \leq d_j$.
The last inequality either holds since $d_{j} \geq 2\Delta + r_{j} \geq (i+1) \Delta \geq t^*_j + p_{j,\tau} + \Delta$, or since $t^*_j < r_j + \Delta$ and $d_j - r_j - p_{j,\tau} \geq 2 \Delta$.
Also, $t^*_{j_q} +  \Delta + p_{j_q,\tau} \leq t^*_{j_{q+1}} + \Delta$ for all $q \in \{1, \ldots, k-1\}$.
Hence, $t_j \geq i\Delta$ for all $j \in J_M$ and the cost at most triples in comparison to $S^*$ since the two additional machines need not run longer than $M$.
Applying the argument to all machines $M$ gives $S \leq 3S^* =O((\nicefrac{c}{\varepsilon} + \nicefrac{1}{\varepsilon^2})\OPT)$, which concludes the proof.
 \end{proof}

\begin{lemma}
\label{le:j2}
For jobs from $J_2$ and $J_3$, \ALGO produces a feasible schedule with rental cost $O(\nicefrac{c}{\varepsilon} + \nicefrac{1}{\varepsilon^3})\OPT$.
\end{lemma}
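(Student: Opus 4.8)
The plan is to mirror the structure of the proof of Lemma~\ref{le:j1}, treating $J_2$ and $J_3$ separately and, within $J_2$, distinguishing the two subsets $J_{2,1}$ and $J_{2,2}$ according to whether the predicate $C$ holds. For each fixed phase $i$, let $N_i$ be the jobs of the respective set released during $[(i-1)\Delta, i\Delta)$, let $S$ be the restricted tentative schedule computed in Step~\ref{step:restrictions}, and let $S'$ be the realized schedule after Step~\ref{step:realize}. By Lemma~\ref{le:premature} it suffices to show, per phase, that $S'$ is feasible and that the cost of $S$ (equivalently $S'$, whose cost is no larger) exceeds that of the unrestricted tentative schedule $S^*$ by at most an $O(1/\varepsilon)$ factor; the extra factor of $1/\varepsilon$ relative to Lemma~\ref{le:j1} is exactly what accounts for the $1/\varepsilon^3$ term in the bound, since Lemma~\ref{le:premature} already gives $S^* = O((c/\varepsilon + 1/\varepsilon^2)\OPT)$.

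\textbf{Feasibility.} For $J_3$: jobs are restricted to machines of type $A$ with starting times $\geq i\Delta$ and modified deadlines, and are realized by delaying by $\setup{A}$ and performing the type-$A$ setup at $i\Delta$; feasibility follows because a job $j\in J_3$ finishes by $t_j + \setup{A} + \size{j}{A} \le d_j^{\mathrm{mod}} + \setup{A} = d_j$, using $\setup{A}\le\setup{B}$ (here $d_j^{\mathrm{mod}}$ denotes the deadline after Step~\ref{step:deadlines}). For $J_{2,2}$: jobs run only on type-$B$ machines with starting times $\geq i\Delta$ and are delayed by $\setup{B}$, so the argument is identical to that for $J_1$ restricted to type $B$. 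For $J_{2,1}$: on type-$B$ machines the starting times are $\geq i\Delta$ and are delayed by $\setup{B}$, which is feasible as before; on type-$A$ machines the starting times from the tentative schedule need not be $\geq i\Delta$ (there is no such restriction in Step~\ref{step:restrictions}), so one must check that delaying by $\setup{A}+\Delta$ (when $C$ holds) or by $\setup{A}$ (when $C$ fails, using the machines pre-opened in Step~\ref{openstep}) still respects $d_j^{\mathrm{mod}}$ and pushes every start past $i\Delta$. When $C$ holds this uses $\setup{B}\ge\setup{A}+\Delta$ together with $\slack{j}{B}\ge 2\Delta$ (equivalently the slack of the modified deadline is large enough) to absorb the shift; when $C$ fails, the machine was already opened at the job's release time and only the setup needs to complete, so no deadline is affected. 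This case analysis is the part that requires the most care.

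\textbf{Cost.} For $J_3$ and $J_{2,2}$, since these are forced onto a single machine type, one first argues that the unrestricted tentative schedule $S^*$ loses only a constant factor when we additionally forbid the other machine type — this is exactly the kind of swap already used in Lemma~\ref{le:premature} (a job placed on the "wrong" type in $S^*$ has slack $\ge 2\Delta$ on the forced type, hence can go to an exclusive machine there at a constant-factor cost, or more carefully at an $O(1/\varepsilon)$-factor cost in the $J_3\to A$ / $J_{2,2}\to B$ direction). Then the restriction "starting times $\ge i\Delta$" is handled exactly as in Lemma~\ref{le:j1}: on each machine of $S^*$, split off the at most one job straddling $i\Delta$ onto an exclusive machine (feasible by the $\ge 2\Delta$ slack), move the jobs finishing before $i\Delta$ onto a fresh machine shifted by $\Delta$, and leave the rest untouched, tripling the per-machine cost at worst. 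For $J_{2,1}$, the type-$B$ part is treated the same way, while the type-$A$ part needs no starting-time restriction and is only delayed, so its cost is unchanged up to the additive setup cost; the pre-opened machines of Step~\ref{openstep}, of which there is at most one per job of $J_{2,1}$ with $c\size{j}{B}\le\setup{A}$, contribute total cost $O(\setup{A})$ per such job, which is charged against the fact that such a job pays $\Omega(\setup{A})$ in any schedule (it cannot be run cheaply on type $B$). Summing the per-phase bounds over all phases and invoking Lemma~\ref{le:premature} yields the claimed $O(c/\varepsilon + 1/\varepsilon^3)\OPT$. The main obstacle I anticipate is the $J_{2,1}$ analysis: reconciling the two realization rules (delay by $\setup{A}+\Delta$ versus use of a pre-opened machine) with feasibility on the type-$A$ side while keeping the accounting of the pre-opened machines tight, since this is where the interaction between the predicate $C$, the slack condition, and the cost ratio $c$ all come together.
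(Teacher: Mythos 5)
Your feasibility analysis and your treatment of $J_{2,1}$ and the type-B part of $J_2$ essentially match the paper, but there is a genuine gap at what is actually the crux of this lemma: the cost argument for $J_{2,2}$, i.e., for the restriction that these jobs may only use machines of type B. You justify this by "the kind of swap already used in Lemma~\ref{le:premature}", moving each wrongly-placed job to an exclusive machine of the forced type, and assert this costs "a constant factor, or more carefully $O(\nicefrac{1}{\varepsilon})$". That swap does not give $O(\nicefrac{1}{\varepsilon})$: a $J_{2,2}$ job on a type-A machine in $S^*$ may pay only about $\setup{A}/3+\size{j}{A}$ there (shared setup, $\size{j}{A}>\Delta$), while an exclusive type-B machine costs about $c\setup{B}$, so the per-job ratio can be of order $\nicefrac{c\setup{B}}{\Delta}=O(\nicefrac{c}{\varepsilon})$. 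Applied even only to the $O(\nicefrac{1}{\varepsilon^2})\OPT$ portion this gives $O(\nicefrac{c}{\varepsilon^3})\OPT$, which exceeds the claimed $O(\nicefrac{c}{\varepsilon}+\nicefrac{1}{\varepsilon^3})\OPT$. The paper avoids the factor $c$ by exploiting that $J_{2,2}\neq\emptyset$ only when $C$ fails, so each $j\in J_{2,2}$ has $c\size{j}{B}\leq\setup{A}$, and by a case distinction: if $\nicefrac{\setup{A}}{c}\leq\Delta$, every type-A machine carries at most three such jobs (so $S^*$ pays $\Omega(|N_i^A|\setup{A})$ in setups), while all of $N_i^A$ can be packed, $\lfloor c\Delta/\setup{A}\rfloor$ jobs per machine, into the common window $[i\Delta,(i+1)\Delta)$ on shared type-B machines, giving ratio $O(\nicefrac{\setup{B}}{\Delta})=O(\nicefrac{1}{\varepsilon})$; if $\nicefrac{\setup{A}}{c}>\Delta$, then $\neg C$ forces $\setup{B}<\setup{A}+\Delta$ and a one-for-one replacement of each type-A machine by a type-B machine (feasible since $\size{j}{B}<\size{j}{A}$ for $j\in J_2$) loses only $O(\nicefrac{1}{\varepsilon})$. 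None of this packing/replacement argument, nor the role of the threshold $c\size{j}{B}\lessgtr\setup{A}$, appears in your proposal.

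A related accounting issue: you say the extra $\nicefrac{1}{\varepsilon}$ factor relative to Lemma~\ref{le:j1} "accounts for the $\nicefrac{1}{\varepsilon^3}$ term, since Lemma~\ref{le:premature} already gives $S^*=O((\nicefrac{c}{\varepsilon}+\nicefrac{1}{\varepsilon^2})\OPT)$"; multiplying that whole bound by $\nicefrac{1}{\varepsilon}$ yields $O(\nicefrac{c}{\varepsilon^2}+\nicefrac{1}{\varepsilon^3})\OPT$, again too large. The paper instead uses the refined second part of Lemma~\ref{le:premature}: the jobs of $J_{2,2}$ that sit on type-A machines lie outside $J'$, so their cost share is only $O(\nicefrac{1}{\varepsilon^2})\OPT$, and the extra $\nicefrac{1}{\varepsilon}$ is applied to that share alone. (Minor further slips: in your $J_{2,1}$ accounting the pre-opened machines belong to jobs with $c\size{j}{B}>\setup{A}$, not $\leq\setup{A}$, and the paper's charging uses both that inequality and the fact that at most three $J_2$ jobs fit on one type-A machine; also, for the feasibility of $J_{2,1}$ on type A under $C$, only $\setup{A}+\Delta\leq\setup{B}$ together with $t_j\geq r_j\geq(i-1)\Delta$ is needed, not the slack condition.)
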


\begin{proof}
We first consider the set $J_2$ and argue about jobs from $J_{2,1}$ and $J_{2,2}$ separately. 

\emph{Scheduling jobs from $J_{2,1}$. }
Fix an arbitrary phase $i$ and let $N_i \coloneqq \{j \in J_{2,1} : r_j \in [(i-1)\Delta, i \Delta)\}$.
Let $S$ be the tentative schedule for $N_i$ and let $t_j$ denote the starting time of job $j \in N_i$ in $S$.
Let $S'$ and $t'_j$ be defined analogously for the respective schedule produced by \ALGO.
Let $N^\tau_i \subseteq N_i$ be the set of jobs processed on machines of type $\tau$ in $S$, for $\tau \in \{A,B\}$.
Recall that we compute a tentative schedule for jobs from $J_{2,1}$ at time $i\Delta$ with the additional constraint $t_j \geq i\Delta$ for all $i \in N_i^B$, that is, the starting times of all jobs processed on machines of type B are at least $i\Delta$.
Similar to the previous lemma, with respect to machines of type B, $S'$ is feasible and fulfills the bound on the cost.
For machines of type A we can argue as follows.
In case $C$ holds, we have $t'_j = t_j + \Delta + \setup{A} \geq i\Delta + \setup{A}$ for all $j \in N_i^A$.
Hence, we are able to realize the respective schedule by starting the respective setup processes at time $i\Delta$. 
Also, since $\Delta+\setup{A} \leq \setup{B}$ by the fact that $C$ holds, $t'_j + \size{j}{A} \leq t_j + \Delta + \setup{A} + \size{j}{A} \leq d_j + \Delta + \setup{A} \leq d_j + \setup{B}$, i.e., no original deadline is violated.
Hence, $S'$ is feasible and the bound on the cost is satisfied by a reasoning as in Lemma~\ref{le:j1}.

In case $C$ does not hold, recall that at each arrival of a job $j \in J_{2,1}$, we open a new machine of type A at time $r_j$.
This ensures that a machine is definitely available for the respective job $j \in N_i^A$ at time $t_j + \setup{A}$.
Therefore, we obtain a feasible schedule and it only remains to prove the bound on the cost.
Note that the additional setup cost in $S'$ compared to $S$ is upper bounded by $|J_{2,1}| \cdot \setup{A}$.
Because $\setup{A} < c\size{j}{B}$ holds and $S$ can process at most three jobs on a machine of type A (since $\slack{j}{A} = d_j-r_j-\size{j}{A} < 2\Delta$ implying $d_j < 2\Delta + \size{j}{A} +r_j$ and $\size{j}{A} > \Delta$ by the definition of $J_2$), $|J_{2,1}| \cdot \setup{A} \leq 3S$ holds.
Therefore, \ALGO produces a feasible schedule for jobs from $J_{2,1}$ incurring cost of $O(\nicefrac{c}{\varepsilon} + \nicefrac{1}{\varepsilon^2})\OPT$.

\emph{Scheduling jobs from $J_{2,2}$. }
Fix an arbitrary phase $i$ and let $N_i \coloneqq \{j \in J_{2,2} : r_j \in [(i-1)\Delta, i \Delta)\}$.
For these jobs we claim that the tentative schedule $S$ has cost $O(\nicefrac{c}{\varepsilon} + \nicefrac{1}{\varepsilon^3})\OPT$.
Then, the same reasoning as in Lemma~\ref{le:j1} proves the desired result.
Hence, we only have to show the claim.
To this end, we show how to shift any workload processed on a machine of type A in a tentative schedule $S^*$ corresponding to $S$ without the additional restriction on machine types to use to machines of type B.
Recall that we only need to consider the case that $C$ does not hold since otherwise $J_{2,2} = \emptyset$.
Let $N_i^A \subseteq N_i$ be the jobs processed on machines of type A in $S^*$.
Note that $N_i^A \cap \{ j : \exists \tau \in \{A,B\}~r_j+\size{j}{\tau} > d_j -\setup{B}\} = \emptyset$ and thus according to Lemma~\ref{le:premature}, the cost for jobs from $N_i^A$ is $O(\nicefrac{1}{\varepsilon^2})\OPT$.
We show that only using machines of type B increases the cost by $O(\nicefrac{1}{\varepsilon})$ and distinguish two cases depending on whether $\frac{\setup{A}}{c} \leq \Delta$ holds.

First, we show the claim if $\frac{\setup{A}}{c} \leq \Delta$ holds.
On the one hand, any machine of type A in $S^*$ can process at most three jobs $j$ with $j \in N_i^A$ (since $\slack{j}{A} = d_j-r_j-\size{j}{A} < 2\Delta$ implying $d_j < 2\Delta + \size{j}{A} +r_j$ and $\size{j}{A} > \Delta$ by the definition of $J_2$), leading to (setup) cost of $\Omega(|N_i^A|\setup{A})$.
On the other hand, we can schedule all jobs from $N_i^A$ on $O(\frac{|N_i|\setup{A}}{\Delta c})$ machines of type B each open for $O(\setup{B})$ time units and thus, with cost of $O(\frac{|N_i^A|\setup{A}}{\Delta c}c\setup{B})$.
To see why this is true, observe that for all jobs $j \in N_i^A$ it holds $[r_j,d_j] \supseteq [i\Delta, (i+1)\Delta) \eqqcolon I_i$ since $r_j \leq i\Delta$ and $\slack{j}{A}\geq 2\Delta$.
Because all jobs $j \in N_i^A$ fulfill $\size{j}{B} \leq \setup{A}/c$ by definition, we can accommodate $\lfloor c\Delta/\setup{A}\rfloor \geq 1$ many jobs from $N_i^A$ in $I_i$.
Hence, there is a schedule with cost $O(\nicefrac{1}{\varepsilon^3})\OPT$ only using machines of type B, proving the claim.

Finally, it needs to be proven that the claim also holds for the case $\frac{\setup{A}}{c} > \Delta$ and hence, $\setup{B} < \setup{A} + \Delta$.
Let $M_1, \ldots, M_m$ denote the machines of type A used by $S^*$ and let $\kappa_1,\ldots, \kappa_m$ be the durations for which they are open.
Since $\size{j}{B} \leq \size{j}{A}$ for all $j\in J_{2,2}$, we can replace each machine $M \in \{M_1, \ldots, M_m\}$ by a machine $M'$ of type B using the same schedule on $M'$ as on $M$.
We increase the cost by a factor of at most $\frac{\sum_{i=1}^m c(\setup{B}+ \kappa_i)}{\sum_{i=1}^m (\setup{A} + \kappa_i)} \leq \frac{c \setup{B}}{\setup{A}} + c \leq \frac{c (\Delta +\setup{A})}{\setup{A}} + c \leq \frac{c\Delta}{\setup{A}} + 2c= O(\setup{A}/\Delta) = O(\nicefrac{1}{\varepsilon})$, where the second last bound holds due to $\setup{A}/c > \Delta$.
Thus, we have shown a schedule fulfilling the desired properties and cost $O(\nicefrac{1}{\varepsilon^3})\OPT$ to exist and since $S$ fulfills the same bound on the cost, this proves the claim.

Since all jobs from $J_3$ can be assumed to be scheduled on machines of type A without any loss in the cost, we obtain the same result for $J_3$ as we have in Lemma~\ref{le:j1}.
 \end{proof}

\begin{theorem}
\label{th:theorem}
For $\beta = (1+\varepsilon)\setup{B}$, $\nicefrac{1}{\setup{B}} \leq \varepsilon \leq 1$, \ALGO is $O(\nicefrac{c}{\varepsilon} + \nicefrac{1}{\varepsilon^3})$-competitive.
\end{theorem}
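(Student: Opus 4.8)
The plan is to combine the structural reduction from Lemma~\ref{le:premature} with the per-class feasibility and cost analyses carried out in Lemma~\ref{le:j1} and Lemma~\ref{le:j2}. Recall that by Lemma~\ref{le:intervals} (and the paragraph following it) it suffices to bound the cost incurred on a single $B$-interval, since the instance decomposes into independent subinstances and the cost is additive. Moreover, Lemma~\ref{le:premature} tells us that, at a multiplicative loss of $O(\nicefrac{c}{\varepsilon}+\nicefrac{1}{\varepsilon^2})$, there is a reference schedule in which every job has earliness at least $\setup{B}$ and no machine mixes jobs released in different $\Delta$-windows; this is exactly the benchmark against which the tentative schedules of \ALGO are compared.

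First I would argue that the partition $J = J_1 \cup J_2 \cup J_3$ used by the algorithm is exhaustive: a job with $\slack{j}{B} < 2\Delta$ must have $\slack{j}{A} \geq 2\Delta$ (hence lies in $J_3$), since otherwise both slacks are below $2\Delta = \varepsilon\setup{B}$, contradicting the minimum-slack assumption $\max_\tau \slack{j}{\tau} \geq \beta = (1+\varepsilon)\setup{B}$ even after the deadline is decreased by $\setup{B}$ in Step~\ref{step:deadlines} (which leaves slack $\geq \varepsilon\setup{B} = 2\Delta$). Then I would invoke Lemma~\ref{le:j1} for the $J_1$-jobs, giving a feasible schedule of cost $O(\nicefrac{c}{\varepsilon}+\nicefrac{1}{\varepsilon^2})\OPT$, and Lemma~\ref{le:j2} for the $J_2$- and $J_3$-jobs, giving a feasible schedule of cost $O(\nicefrac{c}{\varepsilon}+\nicefrac{1}{\varepsilon^3})\OPT$. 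Since the schedules produced for the three classes are realized on disjoint sets of machines, their costs simply add, and the global schedule \ALGO outputs is feasible (each sub-schedule is feasible, and the precautionary machines opened in Step~\ref{openstep} are explicitly reused or closed in Step~\ref{step:realize}). Summing, the total rental cost is $O(\nicefrac{c}{\varepsilon}+\nicefrac{1}{\varepsilon^3})\OPT$, which is the claimed competitiveness.

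The one point that needs a little care — and which I expect to be the main (though modest) obstacle — is making precise the two-level ``losing a constant/\,$O(\nicefrac{1}{\varepsilon})$ factor'' bookkeeping: Lemma~\ref{le:premature} already charges an $O(\nicefrac{c}{\varepsilon}+\nicefrac{1}{\varepsilon^2})$ blow-up against \OPT to produce the early-finishing, $\Delta$-separated reference schedule, and then Lemmas~\ref{le:j1}--\ref{le:j2} charge a further $O(1)$ (resp.\ $O(\nicefrac{1}{\varepsilon})$) blow-up of the tentative schedules against that reference. I would state explicitly that these factors multiply rather than compound badly, so the $J_1$-bound stays $O(\nicefrac{c}{\varepsilon}+\nicefrac{1}{\varepsilon^2})$ and the $J_2,J_3$-bound becomes $O(\nicefrac{c}{\varepsilon}+\nicefrac{1}{\varepsilon^3})$ (the extra $\nicefrac{1}{\varepsilon}$ coming from the type-A-to-type-B conversion inside Lemma~\ref{le:j2}). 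One should also note that the $O(1)$-approximate tentative schedules of Section~\ref{sec:tentativeSchedules} (via the ILP and Lemma~\ref{le:polyintervals}) only add another global constant, and that the assumption $\OPT = \Omega(c\cdot r_{max})$ is what lets the phase-indexed additive terms (one setup-sized charge per empty $\Delta$-window) be absorbed into $\OPT$. With these remarks in place the theorem follows immediately by taking the maximum of the three per-class bounds.
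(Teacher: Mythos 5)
Your proposal is correct and matches the paper's (implicit) argument: the theorem is obtained exactly by combining Lemma~\ref{le:premature} with the per-class feasibility and cost bounds of Lemma~\ref{le:j1} and Lemma~\ref{le:j2}, noting that $J_1\cup J_2\cup J_3$ covers all jobs after the deadline shift and that the per-class costs add. Your extra remarks on how the multiplicative losses compose and on the role of the assumption $\OPT=\Omega(c\cdot r_{max})$ are consistent with how those factors are accounted for inside the cited lemmas.
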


\section{Conclusion}
We presented a competitive algorithm for \PROBLEM where jobs with deadlines need to be scheduled on machines rented from the cloud so as to minimize the rental cost.
We parameterized instances by their minimum slack $\beta$ and showed different results depending thereon.
Alternatively, we could examine the problem without this instance parameter and instead understand $\beta$ as a parameter of the algorithm describing the desired maximum tardiness of jobs.

Although the considered setting with $k=2$ types of machines seems to be restrictive, it turned out to be challenging and closing the gap between our lower and upper bound remains open.
Also, it is in line with other research (e.g.\ \cite{azar13,saha13}) and in this regard it is a first step toward models for scheduling machines from the cloud addressing the heterogeneity of machines.

For future work, however, it would be interesting to study even more general models for arbitrary $k>1$.


\begin{thebibliography}{20}

\bibitem{abshoff14}
Abshoff, S., Markarian C., Meyer auf der Heide, F.: Randomized Online Algorithms for Set Cover Leasing Problems. In: Proceedings of the 8th International Conference on Combinatorial Optimization and Applications (COCOA '14), pp. 25--34. Springer, (2014)

\bibitem{ec2} Amazon EC2, \url{https://aws.amazon.com/de/ec2/}

\bibitem{anthony07}
Anthony M. B., Gupta, A.: Infrastructure Leasing Problems. In: Proceedings of the 12th International Integer Programming and Combinatorial Optimization Conference (IPCO '07), pp. 424--438. Springer, (2007)

\bibitem{azar13}
Azar, Y., Ben-Aroya, N., Devanur, N.-R., Jain, N.: Cloud Scheduling with Setup Cost. In: Proceedings of the 25th ACM Symposium on Parallelism in
  Algorithms and Architectures (SPAA '13), pp. 298--304. ACM, (2013)

\bibitem{calibration1}
Bender, M.A., Bunde, D.P, Leung, V.J., McCauley, S., Phillips, C.A.: Efficient Scheduling to Minimize Calibrations. In: Proceedings of the 25th ACM Symposium on Parallelism in Algorithms and Architectures (SPAA '13), pp. 280--287. ACM, (2013)

\bibitem{chuzhoy04}
Chuzhoy, J., Guha, S., Khanna, S., Naor, J.: Machine Minimization for Scheduling Jobs with Interval Constraints. In: Proceedings of the 45th Symposium on Foundations of Computer Science (FOCS '04), pp. 81--90. IEEE, (2004)

\bibitem{devanur14}
Devanur, N.R., Makarychev K., Panigrahi, D., Yaroslavtsev,G.: Online Algorithms for Machine Minimization. CoRR (2014), \url{arxiv.org/abs/1403.0486}

\bibitem{calibration2}
Fineman, T.J., Sheridan B.: Scheduling Non-Unit Jobs to Minimize Calibrations. In: Proceedings of the 27th ACM Symposium on Parallelism in Algorithms and Architectures (SPAA '15), pp. 161--170. ACM, (2015)

\bibitem{google} Google Cloud, \url{https://cloud.google.com/}

\bibitem{kling12}
Kling, P., Meyer auf der Heide, F., Pietrzyk, P.: An Algorithm for Online Facility Leasing. In: Proceedings of the 19th International Colloquium on Structural Information and Communication Complexity (SIROCCO '12), pp. 61--72. Springer, (2012)

\bibitem{cpugpu}
Lee, G., Chun, B.-G., Katz, R.H: Heterogeneity-Aware Resource Allocation and Scheduling in the Cloud. In: Proceedings of the 3rd USENIX Workshop on Hot Topics in Cloud Computing (HotCloud '11), pp. 4--4. USENIX, (2001)

\bibitem{mao12}
Mao, M., Humphrey, M.: A Performance Study on the VM Startup Time in the Cloud. In: Proceedings of the 2012 {IEEE} 5th International Conference on Cloud Computing (ICCC '12), pp. 423--430. IEEE, (2012)

\bibitem{mao10}
Mao, M., Li, J., Humphrey, M.: Cloud Auto-scaling with Deadline and Budget Constraints. In: Proceedings of the 2010 11th IEEE/ACM International Conference on Grid Computing (GRID '10), pp. 41--48. IEEE, (2010)

\bibitem{mmms}
M{\"a}cker, A., Malatyali, M., Meyer auf der Heide, F., Riechers, S.: Cost-efficient Scheduling on Machines from the Cloud. In: Proceedings of the 10th Annual International Conference on Combinatorial Optimization and Applications (COCOA '16), pp. 578--592 . Springer, (2016)

\bibitem{meyerson05}
Meyerson, A.: The Parking Permit Problem. In: Proceedings of the 46th Annual Symposium on Foundations of Computer Science (FOCS '05), pp. 274--282. IEEE, (2005)

\bibitem{raghavan87}
Raghavan, P., Thompson, C.~D.: Randomized Rounding: A Technique for Provably good Algorithms and Algorithmic Proofs. Combinatorica 7(4), 365--374 (1987)

\bibitem{saha13}
Saha, B.: Renting a Cloud. In: Proceedings of the Annual Conference on Foundations of Software Technology and Theoretical Computer Science (FSTTCS '13), pp. 437--448. LIPIcs, (2013)

\bibitem{sgall14}
Sgall, J.: Online Bin Packing: Old Algorithms and New Results. In: Proceedings of the 10th Conference on Computability in Europe (CiE '14), pp. 362--372. Springer, (2014)

\bibitem{yu09}
Yu, G., Zhang, G.: Scheduling with a minimum number of machines. Operations Research Letters 37(2), 97--101 (2009)

\end{thebibliography}
\end{document}